\newcommand{\orvar}{\mathsf{o}}
\newcommand{\uvar}{\mathsf{u}^4}
\newcommand{\ufvar}{\mathsf{u}^5}
\newcommand{\vvar}{\mathsf{v}^4}
\newcommand{\hvar}{\mathsf{h}}
\newcommand{\cvar}{\mathsf{c}}
\newcommand{\ov}[1]{\overline{#1}}
\definecolor{keywordcolor}{rgb}{0.7, 0.1, 0.1}   
\definecolor{tacticcolor}{rgb}{0.0, 0.1, 0.6}    
\definecolor{commentcolor}{rgb}{0.4, 0.4, 0.4}   
\definecolor{symbolcolor}{rgb}{0.0, 0.1, 0.6}    
\definecolor{sortcolor}{rgb}{0.1, 0.5, 0.1}      
\definecolor{attributecolor}{rgb}{0.7, 0.1, 0.1} 
\def\orcidID#1{\href{http://orcid.org/#1}{\protect\raisebox{-1.25pt}{\protect\includegraphics{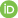}}}}
\newtheorem*{theorem*}{Theorem}
\title{\texorpdfstring{Formal Verification of the Empty Hexagon Number}{A Formal Verification of the Empty Hexagon Number}}%
\author{Bernardo {Subercaseaux}}{Carnegie Mellon University}{bsuberca@andrew.cmu.edu}{https://orcid.org/0000-0003-2295-1299}{}
\author{Wojciech {Nawrocki}}{Carnegie Mellon University}{wjnawrocki@cmu.edu}{https://orcid.org/0000-0002-8839-0618}{}
\author{James {Gallicchio}}{Carnegie Mellon University}{jgallicc@andrew.cmu.edu}{https://orcid.org/0000-0002-0838-3240}{}
\author{Cayden {Codel}}{Carnegie Mellon University}{ccodel@andrew.cmu.edu}{https://orcid.org/0000-0003-3588-4873}{}
\author{Mario {Carneiro}}{Carnegie Mellon University}{mcarneir@andrew.cmu.edu}{https://orcid.org/0000-0002-0470-5249}{}
\author{Marijn J. H. {Heule}}{Carnegie Mellon University}{mheule@andrew.cmu.edu}{https://orcid.org/0000-0002-5587-8801}{}
\authorrunning{Subercaseaux et al.}
\titlerunning{Formal Verification of the Empty Hexagon Number}
\keywords{Empty Hexagon Number, Discrete Computational Geometry, Erd\H{o}s-Szekeres}
\begin{document}

\maketitle

\begin{abstract}
  A recent breakthrough in computer-assisted mathematics showed that every set of $30$ points in the plane in general position (i.e., no three points on a common line) contains an empty convex hexagon. 
  With a combination of geometric insights and automated reasoning techniques, Heule and Scheucher constructed CNF formulas $\phi_n$, with $O(n^4)$ clauses, such that if $\phi_n$ is unsatisfiable then every set of $n$ points in general position must contain an empty convex hexagon.
  An unsatisfiability proof for $n = 30$ was then found with a SAT solver using 17\,300 CPU hours of parallel computation. 
  In this paper, we formalize and verify this result in the Lean theorem prover. Our formalization covers ideas in discrete computational geometry and SAT encoding techniques by introducing a framework that connects geometric objects to propositional assignments.
  We see this as a key step towards the formal verification of other SAT-based results in geometry, since the abstractions we use have been successfully applied to similar Erd\H{o}s-Szekeres-type problems.
  Overall, we hope that this work sets a new standard for verification when extensive computation is used for discrete geometry problems, and that it increases the trust the mathematical community has in computer-assisted proofs in this area.
\end{abstract}

\section{Introduction}\label{sec:intro}
Mathematicians are often rightfully skeptical of proofs that rely on extensive computation (e.g., the controversy around the four color theorem~\cite{Walters2004ItAT}).
Nonetheless, many mathematically-interesting theorems have been resolved that way.
SAT solving in particular has been a powerful tool for mathematics, successfully resolving
Keller's conjecture~\cite{brakensiek2023resolution},
the packing chromatic number of the infinite grid~\cite{Subercaseaux_Heule_2023},
the Pythagorean triples problem~\cite{Heule_2016},
Lam's problem~\cite{21bright_sat_based_resolution_lams_problem},
and one case of the Erd\H{o}s discrepancy conjecture~\cite{konev2014sat}.
All of these proofs rely on the same two-step structure:
\begin{itemize}
\item \textbf{(Reduction)} Show that the mathematical theorem of interest is true if a concrete propositional formula~$\phi$ is unsatisfiable.
\item \textbf{(Solving)} Show that $\phi$ is indeed unsatisfiable.
\end{itemize}


Formal methods researchers have devoted significant attention to making the \emph{solving} step reliable, reproducible and trustworthy.
Modern SAT solvers produce proofs of unsatisfiability in formal systems
such as DRAT~\cite{drat-trim14}
that can in turn be checked with verified proof checkers
such as \texttt{cake\_lpr}~\cite{tanVerifiedPropagationRedundancy2023}. 
These tools ensure that when a SAT solver declares a formula~$\phi$ to be unsatisfiable, the formula is indeed unsatisfiable.
In contrast, the \emph{reduction} step can use problem-specific mathematical insights that, when left unverified, threaten the trustworthiness of SAT-based proofs in mathematics. 
A perfect example of the complexity of this reduction step can be found in a recent breakthrough of Heule and Scheucher~\cite{emptyHexagonNumber} in discrete computational geometry. 
They constructed (and solved) a formula $\phi$ whose unsatisfiability implies that every set of 30 points, without three in a common line, must contain an empty convex hexagon.
However, as is common with such results, their reduction argument was only sketched, relied heavily on intuition,
and left several gaps to be filled in.


In this paper we complete and formalize the reduction of Heule and Scheucher in the Lean theorem prover~\cite{demouraLeanTheoremProver2015}. We do so by connecting existing geometric definitions
in the mathematical proof library \textsf{mathlib}~\cite{The_mathlib_Community_2020}
to the unsatisfiability of a particular SAT instance, thus setting a new standard for verifying results which rely on extensive computation.
Our formalization is publicly available at \url{https://github.com/bsubercaseaux/EmptyHexagonLean/tree/itp2024}.

\subparagraph*{Verification of SAT proofs.}
Formal verification plays a crucial role in certifying the \emph{solving} step of SAT-based results.
For example, theorem provers and formal methods tools have been used to verify solvers~\cite{10maric_formal_verification_modern_sat_solver_shallow_embedding_isabelle_hol,oeVersatVerifiedModern2012,skotam_creusat_2022}
and proof checkers~\cite{lammichEfficientVerifiedSAT2020,tanVerifiedPropagationRedundancy2023}.
However, the \emph{reduction} step has not received similar scrutiny.
Some work has been done to verify the reductions to SAT underlying these kinds of mathematical results.
The solution to the Pythagorean triples problem
was verified in the \textsf{Coq} proof assistant
by Cruz-Filipe and coauthors~\cite{formalPythagoreanTriples,LPAR-21:Formally_Proving_Boolean_Pythagorean}.
More generally,
Giljeg\r{a}rd and Wennerbreck~\cite{GilAndWennerbeck} provide a \textsf{CakeML} library
of verified SAT encodings,
which they used to write verified reductions from different puzzles
(e.g., Sudoku, Kakuro, the \emph{N-queens} problem).
The reduction verification techniques we use in this paper
are based on that of Codel, Avigad, and Heule~\cite{Cayden} in the Lean theorem prover.

Formal verification for SAT-based combinatorial geometry
was pioneered by Marić~\cite{19maric_fast_formal_proof_erdos_szekeres_conjecture_convex_polygons_most_six_points}.
He developed a reduction of a case of the Happy Ending Problem to SAT
and formally verified it in \textsf{Isabelle/HOL}.
We give a detailed comparison between his work and ours in~\Cref{sec:related-work}.

\subparagraph*{Lean.}
Initially developed by Leonardo de Moura in 2013~\cite{demouraLeanTheoremProver2015},
the Lean theorem prover has become a popular choice for formalizing modern mathematical research.
Recent successes include the~\emph{Liquid Tensor Experiment}~\cite{Castelvecchi2021}
and the proof of the polynomial Freiman-Ruzsa conjecture~\cite{gowers2023conjecture, slomanATeamMathProves2023},
both of which brought significant attention to Lean.
A major selling point for Lean is the \textsf{mathlib} project~\cite{The_mathlib_Community_2020},
a monolithic formalization of foundational mathematics.
By relying on \textsf{mathlib} for definitions, lemmas, and proof tactics,
mathematicians can focus on the interesting components of a formalization
while avoiding duplication of proof efforts across formalizations.
In turn, by making a formalization compatible with \textsf{mathlib},
future proof efforts can rely on work done today.
In this spirit, we connect our results to~\textsf{mathlib} as much as possible.

\subparagraph*{The Empty Hexagon Number.}
In the 1930s,
Erd\H{o}s and Szekeres, inspired by Esther Klein, showed that for any $k \geq 3$,
one can find a sufficiently large number $n$
such that every $n$ points in \emph{general position}
(i.e., with no three points collinear)
contain a convex \emph{$k$-gon}, i.e., a convex polygon with $k$ vertices~\cite{35erdos_combinatorial_problem_geometry}.
The minimal such $n$ is denoted $g(k)$.
The same authors later showed that $g(k) > 2^{k-2}$
and conjectured that this bound is tight~\cite{60erdos_some_extremum_problems_elementary_geometry}.
Indeed, it is known that $g(5) = 9$ and $g(6) = 17$,
with the latter result obtained by Szekeres and Peters 71 years after the initial conjecture
via exhaustive computer search~\cite{06szekeres_computer_solution_17_point_erdos_szekeres_problem}.
Larger cases remain open,
with $g(k) \leq 2^{k+o(k)}$ the best known upper bound~\cite{suk2017erdos,holmsen2017two}.
This problem is now known as the \emph{Happy Ending Problem},
as it led to the marriage of Klein and Szekeres.

In a similar spirit,
Erd\H{o}s defined $h(k)$
to be the minimal number of points in general position
that is guaranteed to contain a \emph{$k$-hole},
or \emph{empty $k$-gon},
meaning a convex $k$-gon with no other point inside.
It is easy to check that $h(3) = 3$ and $h(4) = 5$.
In 1978, Harborth established that $h(5) = 10$~\cite{Harborth1978}.
Surprisingly, in 1983, Horton discovered constructions of arbitrarily large point sets that 
avoid $k$-holes for $k \geq 7$~\cite{hortonSetsNoEmpty1983}.
Only $h (6)$ remained.
The \emph{Empty Hexagon Theorem},
establishing $h(6)$ to be finite,
was proven independently by Gerken and Nicolás in 2006~\cite{gerkenEmptyConvexHexagons2008,nicolasEmptyHexagonTheorem2007}.
In 2008, Valtr narrowed the range of possible values down to $30 \leq h(6) \leq 1717$,
where the problem remained until the breakthrough by Heule and Scheucher~\cite{emptyHexagonNumber},
who used a SAT solver to prove that $h(6) \leq 30$,
a result we refer to as the \emph{Empty Hexagon Number}.

\section{Outline of the proof}\label{sec:outline}
We will incrementally build sufficient machinery to prove:

\begin{theorem*}
Any finite set of $30$ or more points in the plane in general position has a $6$-hole.
\end{theorem*}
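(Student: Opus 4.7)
The plan is to follow the two-step reduction-and-solving structure outlined in the introduction. First, I would introduce a combinatorial abstraction of point configurations—specifically the \emph{triple orientation} (signotope) of each ordered triple of points—so that the property of containing a 6-hole depends only on this discrete data, not on exact coordinates. Using a canonicalization argument (applying an affine transformation so the points are sorted by $x$-coordinate with distinct abscissae, and rotating so that a chosen extremal point sits in a standard position), I would show it suffices to consider configurations in which the orientation data is everywhere well defined and satisfies the usual signotope axioms. This reduction is safe because affine transformations preserve both general position and the existence of a 6-hole.

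Second, I would reconstruct the Heule--Scheucher CNF formula $\phi_n$ in Lean. Its variables encode, for suitable tuples of points, (a) the orientation of each triple, (b) whether a point lies inside the triangle spanned by three others, and (c) auxiliary ``cap/cup''-style variables tracking potential convex chains that could close into a hexagon. The clauses split into two groups: \emph{consistency clauses}, asserting that the variables behave like a realizable orientation pattern (signotope axioms on $4$-tuples, definitional clauses linking inside-triangle to orientations), and \emph{no-hexagon clauses}, asserting that for every candidate convex hexagon some additional point falls inside. The central lemma is a soundness statement: from any concrete set of $n$ points in general position containing no 6-hole, one can read off a satisfying assignment of $\phi_n$. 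Taking the contrapositive: if $\phi_n$ is unsatisfiable, no such set exists.

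Third, for $n = 30$, I would invoke the machine-checked unsatisfiability certificate of $\phi_{30}$, discharged by a SAT solver and validated by a verified proof checker in the style of \texttt{cake\_lpr}. Combined with the soundness lemma, this rules out any 30-point general-position configuration free of 6-holes. For $n > 30$ the statement follows immediately by restricting attention to any 30-point subset (general position is preserved under subsets, and a 6-hole in the subset is still a 6-hole in the full set).

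The main obstacle I expect is the soundness proof of the encoding rather than the SAT step, which is already well understood. Translating the informal ``every 6-hole is blocked by some interior point'' into clauses over the chosen auxiliary variables requires a precise match between their intended geometric meaning and the corresponding predicates on $\mathbb{R}^2$, with careful handling of orientation sign conventions and the edge cases eliminated by general position. Verifying the canonicalization step—that an affine normalization can be applied without loss of generality and preserves the 6-hole property in both directions—is conceptually routine but formally delicate, since it forces one to work inside Lean's framework for affine maps and prove that the induced orientation pattern is the correct transform of the original one.
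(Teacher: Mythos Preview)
Your outline tracks the paper's approach closely, but two steps as stated would not go through.

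First, the canonicalization cannot be done with affine maps alone. The paper's encoding assumes the point list is in \emph{canonical position}, which requires simultaneously that the points are sorted by $x$-coordinate \emph{and} that $\sigma(p_1,p_i,p_j)=+1$ for all $1<i<j$ (the CCW-order condition). Since affine maps preserve every triple orientation, no affine transformation of a generic configuration will force the CCW-order condition to hold; the paper obtains it via the projective map $(x,y)\mapsto(y/x,1/x)$ applied after translating the leftmost point to the origin, sending $p_1$ to a point at infinity that is then reinserted at finite coordinates. The CCW-order condition is load-bearing: the encoding drops all variables $\orvar_{1,i,j}$ and restricts the index ranges in clauses~\labelcref{eq:insideClauses1,eq:insideClauses2,eq:holeDefClauses1} to $a\ge 2$ precisely because it is assumed. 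You also omit the reflection-based \textsf{Lex order} symmetry break, which the paper accommodates through the \texttt{parity} flag in the definition of $\sigma$-equivalence.

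Second, your extension to $n>30$ is incorrect. The claim that ``a 6-hole in the subset is still a 6-hole in the full set'' is false: emptiness is relative to the ambient set, so a hexagon empty with respect to a 30-point subset $S'\subseteq S$ may contain a point of $S\setminus S'$ and fail to be a 6-hole in $S$. The needed monotonicity does hold, but the argument must remove an \emph{extreme} point: if $p$ is a vertex of the convex hull of $S$, then $p\notin\operatorname{conv}(S\setminus\{p\})\supseteq\operatorname{conv}(H)$ for any $H\subseteq S\setminus\{p\}$, so a 6-hole of $S\setminus\{p\}$ remains a 6-hole in $S$. Iterating this down to 30 points yields the reduction; restricting to an arbitrary 30-point subset does not.
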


\begin{proof}[Outline of the proof]
We begin \Cref{sec:geometry} with a precise statement in Lean of the above theorem and involved geometric terms.
In a nutshell, the proof consists of building a CNF formula~$\phi_n$ such that 
from any set $S$ of $n$ points in general position without a $6$-hole we can construct a satisfying assignment $\tau_S$ for $\phi_n$.
Then, checking that $\phi_{30}$ is unsatisfiable implies that no such set $S$ of size $30$ exists, thus implying the theorem. 
In order to construct $\phi_n$, one must first discretize the continuous space $\mathbb{R}^2$. \emph{Triple orientations}, presented in~\Cref{sec:triple-orientations}, are a way to achieve this. Concretely, any three points $p,q,r$ in general position correspond to either a clockwise turn, denoted by $\sigma(p, q, r) = -1$, or a counterclockwise turn, denoted by $\sigma(p, q, r) = +1$, depending on whether $r$ is above the directed line $\overrightarrow{pq}$ or not. 
In this way, every set $S$ of points in general position induces an assignment $\sigma_S: S^3 \to \{-1,+1\}$
of triple orientations.
We show in~\Cref{sec:triple-orientations} that whether $S$ contains a $k$-hole (i.e., \lstinline|HasEmptyKGon k S|) depends entirely on $\sigma_S$. 
As each orientation $\sigma(p, q, r)$ can only take two values, we can represent each orientation $\sigma(p,q,r)$ with a boolean variable. Any set of points $S$ in general position thus induces an assignment $\tau_S$ over its \emph{orientation variables}. 
Because \lstinline|HasEmptyKGon k S| depends only on $\sigma_S$, it can be written as a boolean formula over the orientation variables.
%
Unfortunately, it is practically infeasible to determine if such a formula is satisfiable with a naïve encoding.
In order to create a better encoding,~\Cref{sec:symmetry-breaking} shows that one can assume, without loss of generality, that the set of points $S$ is in \emph{canonical position}.
Canonicity eliminates a number of symmetries from the problem --
ordering, rotation, and mirroring --
significantly reducing the search space.
In~\Cref{sec:encoding}, we show the correctness of the efficient encoding of Heule and Scheucher~\cite{emptyHexagonNumber} for constructing
a smaller CNF formula $\phi_n$.
Concretely, we show that any finite set of $n$ points in canonical position
containing no $6$-hole
would give rise to a propositional assignment $\tau_S$ satisfying $\phi_n$.
However, $\phi_{30}$ (depicted in~\Cref{fig:full-encoding}) is unsatisfiable;
therefore no such set of size $30$ exists
and the theorem follows by contradiction.
As detailed in~\Cref{sec:encoding},
to establish unsatisfiability of $\phi_{30}$
we passed the formula produced by our verified encoder to a SAT solver,
and used a verified proof checker to certify the correctness of
the resulting unsatisfiability proof.
The construction of $\phi_n$ and $\tau_S$
involves sophisticated optimizations
which we justify using geometric arguments.
\end{proof}




\section{Geometric Preliminaries}\label{sec:geometry}
We identify points with elements of $\mathbb{R}^2$. Concretely,~\lstinline|abbrev Point := EuclideanSpace ℝ (Fin 2)|.
The next step is to define what it means for a $k$-gon to be \emph{empty} (with respect to a set of points) and \emph{convex}, which we do in terms of \textsf{mathlib} primitives.

\begin{lstlisting}
/-- `EmptyShapeIn S P' means that `S' carves out an empty shape in `P':
the convex hull of `S' contains no point of `P' other than those already in `S'. -/
def EmptyShapeIn (S P : Set Point) : Prop :=
    ∀ p ∈ P \ S, p ∉ convexHull ℝ S

/-- `ConvexPoints S' means that `S' consists of extremal points of its convex hull,
i.e., the point set encloses a convex polygon. -/
def ConvexPoints (S : Set Point) : Prop :=
    ∀ a ∈ S, a ∉ convexHull ℝ (S \ {a})

def ConvexEmptyIn (S P : Set Point) : Prop :=
    ConvexPoints S ∧ EmptyShapeIn S P

def HasEmptyKGon (k : Nat) (S : Set Point) : Prop :=
    ∃ s : Finset Point, s.card = k ∧ ↑s ⊆ S ∧ ConvexEmptyIn s S
\end{lstlisting}

Let \lstinline|ListInGenPos| be a predicate that states that a list of points is in \emph{general position}, i.e., no three points lie on a common line (made precise in~\Cref{sec:triple-orientations}).
With this we can already state the main theorem of our paper.

\begin{lstlisting}
theorem hole_6_theorem (pts : List Point) (gp : ListInGenPos pts)
    (h : pts.length ≥ 30) : HasEmptyKGon 6 pts.toFinset
\end{lstlisting}

At the root  of the encoding of Heule and Scheucher is the idea that the~\lstinline|ConvexEmptyIn| predicate can be determined by analyzing only triangles. In particular, that a set $s$ of $k$ points in a pointset $S$ form an empty convex $k$-gon if and only if all the ${k \choose 3}$ triangles induced by vertices in $s$ are empty with respect to $S$. This is discussed informally in~\cite[Section 3, Eq.~4]{emptyHexagonNumber}.
Concretely, we prove the following theorem:
\begin{lstlisting}
theorem ConvexEmptyIn.iff_triangles {s : Finset Point} {S : Set Point}
    (sS : ↑s ⊆ S) (sz : 3 ≤ s.card) :
    ConvexEmptyIn s S ↔
    ∀ (t : Finset Point), t.card = 3 → t ⊆ s → ConvexEmptyIn t S
\end{lstlisting}
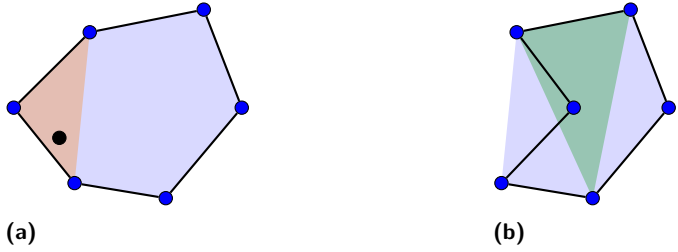
\begin{figure}
    \centering
    \begin{subfigure}{0.45\textwidth}
        \begin{tikzpicture}
        \coordinate (a) at (0,0);
        \coordinate (b) at (1, 1);
        \coordinate (c) at (2.5, 1.3);
        \coordinate (d) at (3, 0);
        \coordinate (e) at (0.8, -1);
        \coordinate (f) at (2, -1.2);

        \fill[blue, opacity=0.15] (a) -- (b) -- (c) -- (d) -- (f) -- (e) -- cycle;
        \fill[orange, opacity=0.3] (a) -- (b) -- (e) -- (a) -- cycle;

        \node[draw, circle, black, fill=blue, inner sep=0pt, minimum size=5pt] (pA) at (a) {};
        \node[draw, circle, black, fill=blue, inner sep=0pt, minimum size=5pt] (pB) at (b) {};
        \node[draw, circle, black, fill=blue, inner sep=0pt, minimum size=5pt] (pC) at (c) {};
        \node[draw, circle, black, fill=blue, inner sep=0pt, minimum size=5pt] (pD) at (d) {};
        \node[draw, circle, black, fill=blue, inner sep=0pt, minimum size=5pt] (pE) at (e) {};
        \node[draw, circle, black, fill=blue, inner sep=0pt, minimum size=5pt] (pF) at (f) {};

        \node[draw, circle, black, fill=black, inner sep=0pt, minimum size=5pt] (pG) at (0.6, -0.4) {};

        \draw[thick] (pA) -- (pB) -- (pC) -- (pD) -- (pF) -- (pE) -- (pA);

        \end{tikzpicture}
        \caption{}\label{fig:triangulation-a}
    \end{subfigure}
    \begin{subfigure}{0.45\textwidth}
        \begin{tikzpicture}
            \coordinate (a) at (1.75,0);
            \coordinate (b) at (1, 1);
            \coordinate (c) at (2.5, 1.3);
            \coordinate (d) at (3, 0);
            \coordinate (e) at (0.8, -1);
            \coordinate (f) at (2, -1.2);
            \fill[blue, opacity=0.15]  (b) -- (c) -- (d) -- (f) -- (e) -- (b) -- cycle;
            \fill[green!60!black, opacity=0.3] (b) -- (c) -- (f) -- (b) -- cycle;

            \node[draw, circle, black, fill=blue, inner sep=0pt, minimum size=5pt] (pA) at (a) {};
            \node[draw, circle, black, fill=blue, inner sep=0pt, minimum size=5pt] (pB) at (b) {};
            \node[draw, circle, black, fill=blue, inner sep=0pt, minimum size=5pt] (pC) at (c) {};
            \node[draw, circle, black, fill=blue, inner sep=0pt, minimum size=5pt] (pD) at (d) {};
            \node[draw, circle, black, fill=blue, inner sep=0pt, minimum size=5pt] (pE) at (e) {};
            \node[draw, circle, black, fill=blue, inner sep=0pt, minimum size=5pt] (pF) at (f) {};


            \draw[thick] (pA) -- (pB) -- (pC) -- (pD) -- (pF) -- (pE) -- (pA);

        \end{tikzpicture}
        \caption{}\label{fig:triangulation-b}
    \end{subfigure}
    \caption{Illustration of the proof for \lstinline|ConvexEmptyIn.iff_triangles|. The left subfigure shows how a point in $S \setminus s$ that lies inside  $s$ will be inside one of the triangles induced by the convex hull of $s$ (orange triangle). The right subfigure shows how if the $\textsf{ConvexPoints}$ predicate does not hold of $s$, then some point $a \in s$ will be inside one of the triangles induced by the convex hull of $s \setminus \{a\}$.}\label{fig:triangulation}
\end{figure}
\begin{proof}[Proof sketch]
    We first prove a simple monotonicity lemma: if $\textsf{ConvexPoints}(s)$, then $\textsf{ConvexPoints}(s')$ for every $s' \subseteq s$, and similarly $\textsf{EmptyShapeIn}(s, S) \Rightarrow \textsf{EmptyShapeIn}(s', S)$ for every set of points $S$.
    By instantiating this monotonicity lemma over all subsets $t \subseteq s$ with $|t| = 3$ we get the forward direction of the theorem.
    For the backward direction it is easier to reason contrapositively: if the~$\textsf{ConvexPoints}$ predicate does not hold of $s$, or if $s$ is not empty w.r.t.~$S$, then we want to show that there is a triangle $t \subseteq s$ that is also not empty w.r.t.~$S$. To see this, let $H$ be the convex hull of $s$, and then by Carath\'eodory's theorem (cf. \lstinline|theorem convexHull_eq_union| from \textsf{mathlib}), every point in $H$ is a convex combination of at most $3$ points in $s$, and consequently, of exactly $3$ points in $s$.
    If $s$ is non-empty w.r.t.~$S$, then there is a point $p \in S \setminus s$ that belongs to $H$, and by Carath\'eodory, $p$ is a convex combination of $3$ points in $s \setminus \{a\}$, and thus lies inside a triangle $t \subseteq s$ (\Cref{fig:triangulation-a}). If $s$ does not hold $\textsf{ConvexPoints}$, then there is a point $a \in s$ such that $a \in \textsf{convexHull}(s \setminus \{a\})$, and by Carath\'eodory again, $a$ is a convex combination of $3$ points in $s$, and thus lies inside a triangle $t \subseteq s \setminus \{a\}$ (\Cref{fig:triangulation-b}).
\end{proof}
    
The next section shows how boolean variables can be used to encode which triangles are empty w.r.t.~a pointset, which as the previous theorem shows, can be used to encode the presence or absence of $k$-holes.

\section{Triple Orientations}\label{sec:triple-orientations}
An essential step for obtaining computational proofs of geometric results is \emph{discretization}: problems concerning the existence of an object $\mathcal{O}$ in a continuous space such as $\mathbb{R}^2$ must be reformulated in terms of the existence of a discrete and finitely representable object $\mathcal{O}'$ that a computer can find (or discard its existence).
This poses a particular challenge for problems in which the desired geometric object $\mathcal{O}$ is characterized by very specific coordinates of points, requiring to deal with floating point arithmetic or numerical instability.
Fortunately, this is not the case for Erd\H{o}s-Szekeres-type problems such as determining the value of $h(k)$, which are naturally well-suited for computation.
This is so because the properties of interest (e.g., convexity, emptiness) can be described in terms of axiomatizable relationships between points and lines (e.g., point $p$ is above the line $\overrightarrow{qr}$, lines $\overrightarrow{qr}$ and $\overrightarrow{st}$ intersect, etc.), which are invariant under rotations, translations, and even small perturbations of the coordinates. This suggests the problems can be discretized in terms of boolean variables representing these relationships, forgetting the specific coordinates of the points.
The combinatorial abstraction that has been most widely used in Erd\H{o}s-Szekeres-type problems is that of \emph{triple orientations}~\cite{ emptyHexagonNumber, scheucherTwoDisjoint5holes2020}. This concept is also known as \emph{signotopes}~\cite{felsnerSweepsArrangementsSignotopes2001,subercaseaux2023minimizing},  Knuth's \emph{counterclockwise} relation~\cite{knuthAxiomsHulls1992}, or \emph{signatures}~\cite{szekeres_peters_2006}.
Given points $p, q, r$, their \emph{triple-orientation} is defined as
\newcommand{\sign}{\operatorname{sign}}
\[
  \sigma(p, q, r) = \sign \det \begin{pmatrix} p_x & q_x & r_x \\ p_y & q_y & r_y \\ 1 & 1 & 1 \end{pmatrix} = \begin{cases}
    1 & \text{if } p, q, r \text{ are \emph{oriented} counterclockwise}, \\
    0 & \text{if } p, q, r \text{ are collinear}, \\
    -1 & \text{if } p, q, r \text{ are \emph{oriented} clockwise}.
  \end{cases}.
\]

\begin{figure}
  \centering
\begin{tikzpicture}
  \node[draw, circle, black, fill=blue, inner sep=0pt, minimum size=5pt] (p) at (0,0) {};
  \node[] at (-0.2, 0.25) {$p$};
  \node[draw, circle, black, fill=blue, inner sep=0pt, minimum size=5pt] (q) at (5,1) {};
  \node[] at (5.2, 0.75) {$q$};
  \node[draw, circle, black, fill=blue, inner sep=0pt, minimum size=5pt] (r) at (2,3) {};
  \node[] at (2, 3.25) {$r$};

  \node[draw, circle, black, fill=blue, inner sep=0pt, minimum size=5pt] (s) at (1.5, 1) {};
  \node[] at (1.35, 1.2) {$s$};

  \node[draw, circle, black, fill=blue, inner sep=0pt, minimum size=5pt] (t) at (4.5, 3) {};
  \node[] at (4.3, 3.25) {$t$};

  \draw[ thick,  green!60!black] (p) -- (r);
  \draw[ thick,  green!60!black, -latex] (r) -- (q);

  \draw[ thick,  black] (r) -- (s);
  \draw[ thick,  black, -latex] (s) -- (q);

  \draw[thick,  red] (p) -- (s);
  \draw[thick,  red, -latex] (s) -- (t);
\end{tikzpicture}
\caption{Illustration of triple orientations, where $\sigma(p, r, q) = -1, \sigma(r, s, q) = 1, $ and $\sigma(p, s, t) = 0$.}\label{fig:triple-orientation}
\end{figure}
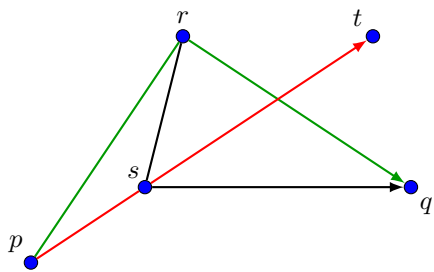

An example is illustrated in~\Cref{fig:triple-orientation}. We directly use \textsf{mathlib}'s definition of the determinant to define $\sigma$.
\begin{lstlisting}
inductive Orientation : Type where
  | cw -- clockwise :=  -
  | ccw -- counter clockwise := +
  | collinear -- := 0

noncomputable def σ (p q r : Point) : Orientation :=
  let det := Matrix.det !![p.x, q.x, r.x ; p.y, q.y, r.y ; 1, 1, 1]
  if 0 < det then ccw
  else if det < 0 then cw
  else collinear
\end{lstlisting}


Using the function $\sigma$ we can define the notion of \emph{general position} for collections (e.g., finite sets, lists, etc.) of points, simply postulating that $\sigma(p, q, r) \neq 0$ for every three distinct points $p, q, r$ in the collection.
Furthermore, we can start formalizing sets of points that are \emph{equivalent} with respect to their triple orientations, and consequently, properties of pointsets that are fully captured by their triple orientations~(\emph{orientation properties}).
\begin{lstlisting}
structure σEquiv (S T : Set Point) where
  f : Point → Point
  bij : Set.BijOn f S T
  parity : Bool -- See Section 4 for details on this field
  σ_eq : ∀ (p ∈ S) (q ∈ S) (r ∈ S), σ p q r = parity ^^^ σ (f p) (f q) (f r)

def OrientationProperty (P : Set Point → Prop) :=
  ∀ {{S T}}, S ≃σ T → P S → P T -- `≃σ` is infix notation for `σEquiv`
\end{lstlisting}

To illustrate how these notions will be used, let us consider the property
\(
  \pi_k(S) \triangleq \text{\em ``pointset } S \text{ contains an empty convex } k\text{-gon''}
\), formalized as \lstinline|HasEmptyKGon|.

Based on \lstinline|ConvexEmptyIn.iff_triangles|, we know that $\pi_k(S)$ can be written
in terms of whether certain triangles are empty w.r.t $S$.
We can define triangle membership using $\sigma$,
and prove its equivalence to the geometric definition.
\begin{lstlisting}
/-- `Means that `a' is in the triangle `pqr', possibly on the boundary. -/
def PtInTriangle (a : Point) (p q r : Point) : Prop :=
  a ∈ convexHull ℝ {p, q, r}

/-- `Means that `a' is in the triangle `pqr' strictly, not on the boundary. -/
def σPtInTriangle (a p q r : Point) : Prop :=
  σ p q a = σ p q r ∧ σ p a r = σ p q r ∧  σ a q r = σ p q r

theorem σPtInTriangle_iff {a p q r : Point} (gp : InGenPos₄ a p q r) :
  σPtInTriangle a p q r ↔ PtInTriangle a p q r
\end{lstlisting}






Heule and Scheucher used the orientation-based definition~\cite{emptyHexagonNumber}, and as it is common in the area, its equivalence to the \emph{ground-truth} mathematical definition was left implicit.
This equivalence, formalized in~\lstinline|theorem σPtInTriangle_iff| is not trivial to prove:
the forward direction in particular requires reasoning about convex combinations and determinants.
Using the previous theorem, we can generalize to $k$-gons as follows.
\begin{lstlisting}
def σIsEmptyTriangleFor (a b c : Point) (S : Set Point) : Prop :=
  ∀ s ∈ S, ¬σPtInTriangle s a b c

def σHasEmptyKGon (n : Nat) (S : Set Point) : Prop :=
  ∃ s : Finset Point, s.card = n ∧ ↑s ⊆ S ∧ ∀ (a ∈ s) (b ∈ s) (c ∈ s), 
  a ≠ b → a ≠ c → b ≠ c → σIsEmptyTriangleFor a b c S

theorem σHasEmptyKGon_iff_HasEmptyKGon (gp : ListInGenPos pts) :
      σHasEmptyKGon n pts.toFinset ↔ HasEmptyKGon n pts.toFinset
\end{lstlisting}

Then, because \lstinline|σHasEmptyKGon| is ultimately defined in terms of $\sigma$, we can prove
\begin{lstlisting}
lemma OrientationProperty_σHasEmptyKGon : OrientationProperty (σHasEmptyKGon n)
\end{lstlisting}
Which in combination with \lstinline|theorem σHasEmptyKGon_iff_HasEmptyKGon|, provides
\begin{lstlisting}
theorem OrientationProperty_HasEmptyKGon : OrientationProperty (HasEmptyKGon n)
\end{lstlisting}

Let us discuss why the previous theorem is relevant, as it plays an important role in the formalization of Erd\H{o}s-Szekeres-type problems. This boils down to two reasons:
\begin{enumerate}
  \item If we prove that the function $\sigma$ is invariant under a certain transformation of its arguments (e.g., rotations, translations, etc.) then we can directly conclude that any orientation property is invariant under the same transformation. This is a powerful tool for applying manipulations to pointsets that preserve the properties of interest, which will be key for symmetry breaking (see~\Cref{sec:symmetry-breaking}).
    For a concrete example, consider a proof of an Erd\H{o}s-Szekeres-type result that starts by saying \emph{``we assume without loss of generality that points $p_1, \ldots, p_n$ all have positive $y$-coordinates''}.
    As translations are $\sigma$-equivalences, we can see that this assumption
    indeed does not impact the truth of any orientation property.
  \item As introduced at the beginning of this section, SAT encodings for Erd\H{o}s-Szekeres-type problems are based on capturing properties like convexity or emptiness in terms of triple orientations, thus reducing a continuous search space to a discrete one. Because we have proved that $\pi_k(S)$ is an orientation property, the values of $\sigma$ for all triples of points in $S$ contain enough information to determine whether $\pi_k(S)$ or not.  Therefore, we have proved that given $n$ points it is enough to analyze the values of $\sigma$ over these points, a discrete space with at most $2^{n^3}$ possibilities, instead of $\left(\mathbb{R}^2\right)^n$. This is the key idea that will allow us to transition from the finitely-verifiable statement \emph{``no set of triple orientations over $n$ points satisfies property $\pi_k$''} to \emph{``no set of $n$ points satisfies property $\pi_k$''}.
\end{enumerate}

\subsection{Properties of orientations}\label{sec:sigma-props}

We now prove,
assuming points are sorted left-to-right (which is justified in~\Cref{sec:symmetry-breaking}),
that certain \emph{$\sigma$-implication-properties} hold.
Consider four points $p, q, r, s$ with $p_x < q_x < r_x < s_x$.
If $p, q, r$ are oriented counterclockwise,
and $q, r, s$ are oriented counterclockwise as well,
then it follows that $p, r, s$ must be oriented counterclockwise
(see~\Cref{fig:orientation-implication}).
We prove a number of properties of this form:

\begin{lstlisting}
theorem σ_prop₁ (h : Sorted₄ p q r s) (gp : InGenPos₄ p q r s) :
    σ p q r = ccw → σ q r s = ccw → σ p r s = ccw

 [...]

theorem σ_prop₃ (h : Sorted₄ p q r s) (gp : InGenPos₄ p q r s) :
    σ p q r = cw → σ q r s = cw → σ p r s = cw
\end{lstlisting}

They will be used in justifying the addition of clauses~\labelcref{eq:signotopeClauses11,eq:signotopeClauses12};
clauses like these or the one below are easily added,
and are commonly used to reduce the search space in SAT encodings~\cite{emptyHexagonNumber,scheucherTwoDisjoint5holes2020,subercaseaux2023minimizing, szekeres_peters_2006}.

\begin{align}
  &\left(\neg \orvar_{a, b, c} \lor \neg \orvar_{a, c, d} \lor \orvar_{a, b, d}\right) \land \left(\orvar_{a, b, c} \lor \orvar_{a, c, d} \lor  \neg \orvar_{a, b, d}\right)
\end{align}

Our proofs of these properties are based on an equivalence between the orientation of a triple of points and the \emph{slopes} of the lines that connect them. Namely, if $p, q, r$  are sorted from left to right, then (i) $\sigma(p,q,r)=1 \iff \textsf{slope}(\overrightarrow{pq}) < \textsf{slope}(\overrightarrow{pr})$  and (ii) $\sigma(p,q,r)=1 \iff \textsf{slope}(\overrightarrow{pr}) < \textsf{slope}(\overrightarrow{qr})$. By first proving these \emph{slope-orientation} equivalences we can then easily prove \lstinline|σ_prop₁| and others, as illustrated in~\Cref{fig:orientation-implication}.

\begin{figure}
  \centering
  \begin{tikzpicture}
    \newcommand{\localspacing}{4.5}

    \foreach \i in {0, 1, 2} {

      \coordinate (p\i) at ( \i*\localspacing +0.5,0);
      \coordinate (q\i) at ( \i*\localspacing +2.5, 0.75);
      \coordinate (r\i) at ( \i*\localspacing +3.25, 1.5);
      \coordinate (s\i) at ( \i*\localspacing +4.0, 3.25);
    }
    \coordinate (0p) at (13,0);
    \coordinate (0q) at (13, 0.75);
    \coordinate (0r) at (13, 1.5);
    \coordinate (0s) at (13, 3.25);

    \pic [draw, latex-latex,
    angle radius=9mm, angle eccentricity=0.8, fill=blue!20!white,
    "$\text{\tiny{\(\theta_1\)}}$"] {angle = 0p--p0--q0};

    \pic [draw, latex-latex,
    angle radius=7mm, angle eccentricity=0.6, fill=orange!20!white,
    "$\text{\tiny{\(\theta_2\)}}$"] {angle = 0q--q0--r0};

    \pic [draw, latex-latex,
    angle radius=7mm, angle eccentricity=0.6, fill=orange!20!white,
    "$\text{\tiny{\(\theta_2\)}}$"] {angle = 0q--q1--r1};

    \pic [draw, latex-latex,
    angle radius=6mm, angle eccentricity=0.6, fill=yellow!20!white,
    "$\text{\tiny{\(\theta_3\)}}$"] {angle = 0r--r1--s1};

    \pic [draw, latex-latex,
    angle radius=9mm, angle eccentricity=0.8, fill=purple!20!white,
    "$\text{\tiny{\(\theta_4\)}}$"] {angle = 0p--p2--r2};

    \pic [draw, latex-latex,
    angle radius=6mm, angle eccentricity=0.6, fill=yellow!20!white,
    "$\text{\tiny{\(\theta_3\)}}$"] {angle = 0r--r2--s2};

    \foreach \i in {0, 1, 2} {

      \node[draw, circle, black, fill=blue, inner sep=0pt, minimum size=5pt] (p\i) at ( \i*\localspacing +0.5,0) {};
      \node[] at ( \i*\localspacing + 0.3, -0.25) {$p$};
      \node[draw, circle, black, fill=blue, inner sep=0pt, minimum size=5pt] (q\i) at ( \i*\localspacing +2.5, 0.75) {};
      \node[] at ( \i*\localspacing +2.6, 0.5) {$q$};
      \node[draw, circle, black, fill=blue, inner sep=0pt, minimum size=5pt] (r\i) at ( \i*\localspacing +3.25, 1.5) {};
      \node[] at ( \i*\localspacing +3.4, 1.3) {$r$};

      \node[draw, circle, black, fill=blue, inner sep=0pt, minimum size=5pt] (s\i) at ( \i*\localspacing +4.0, 3.25) {};
      \node[] at ( \i*\localspacing +3.75, 3) {$s$};
    }

    \draw[thick, black] (p0) -- (q0);
    \draw[thick, black, -latex] (q0) -- (r0);

    \draw[ thick, black] (q1) -- (r1);
    \draw[ thick, black, -latex] (r1) -- (s1);

    \draw[ thick, black] (p2) -- (r2);
    \draw[ thick, black, -latex] (r2) -- (s2);



    \draw[dashed] (p0) -- (0p);
    \draw[dashed] (q0) -- (0q);
    \draw[dashed] (r0) -- (0r);
    \draw[dashed] (s0) -- (0s);

  \end{tikzpicture}
  \caption{Illustration for $\sigma(p,q,r) = 1 \; \land \; \sigma(q,r,s) = 1 \implies \sigma(p, r, s) = 1$. As we have assumptions $\theta_3 > \theta_2 > \theta_4$  by the forward direction of the \emph{slope-orientation equivalence}, we deduce $\theta_3 > \theta_4$, and then conclude $\sigma(p, r, s) = 1$ by the backward direction of the \emph{slope-orientation equivalence}.  }\label{fig:orientation-implication}
\end{figure}
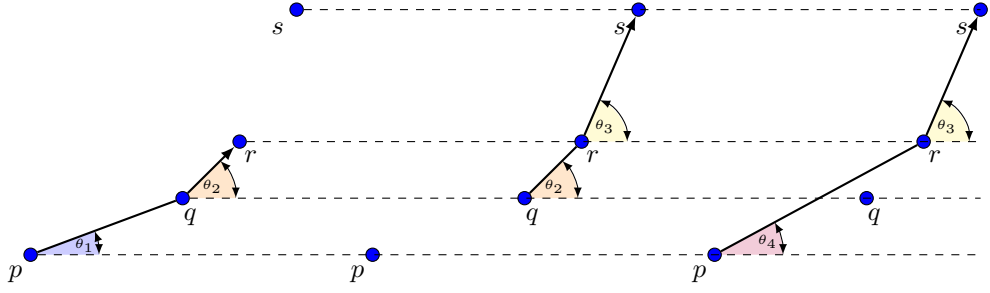


\section{Symmetry Breaking}\label{sec:symmetry-breaking}
\emph{Symmetry breaking} plays a key role in modern SAT-solving by substantially reducing the search space of assignments to a formula~\cite{biereHandbookSatisfiabilityVolume2009,Crawford}.
For example, if one proves that all satisfying assignments to a formula $\phi$ have either (i) $x_1 = 0, x_2 = 1$, or  (ii) $x_1 = 1, x_2 = 0$, and there is a bijection between satisfying assignments of form (i) and satisfying assignments of form (ii), then one can assume, \emph{without loss of generality}, that $x_1 = 0, x_2 = 1$, and thus add unit clauses $\ov{x_1}$ and $x_2$ to the formula $\phi$ while preserving its satisfiability.

In the context of the Empty Hexagon Number, the symmetry breaking done by Heule and Scheucher consists in assuming
that in order to search for a list of $30$ points in general position without a $6$-hole,
it suffices to can search only amongst lists of $30$ points in \emph{canonical} position.
These are defined as follows.
\begin{definition}[Canonical Position]
A list $L = (p_1,\ldots, p_{n})$ of points is said to be in canonical position if it satisfies all the following properties:
\begin{itemize}
    \item \textbf{($x$-order)} The points are sorted with respect to their $x$-coordinates, i.e., $x(p_i) < x(p_j)$ for all $1 \leq i < j \leq n$.
    \item \textbf{(General Position)} No three points are collinear, i.e., for all $1 \leq i < j < k \leq n$, we have $\sigma(p_i, p_j, p_k) \neq 0$.
    \item \textbf{(CCW-order)} All orientations $\sigma(p_1, p_i, p_j)$, with $1 < i < j \leq n$, are counterclockwise.
    \item \textbf{(Lex order)} The list of orientations \( \left(\sigma\left(p_{\lceil \frac{n}{2} \rceil -1}, p_{\lceil \frac{n}{2} \rceil},p_{\lceil \frac{n}{2} \rceil+1}\right), \ldots, \sigma\left(p_2, p_3, p_4\right) \right)\) is not lexicographically smaller than the list \(\left(\sigma\left(p_{\lfloor \frac{n}{2} \rfloor  + 1}, p_{\lfloor \frac{n}{2} \rfloor+2},p_{\lfloor \frac{n}{2} \rfloor+3}\right), \ldots, \sigma\left(p_{n-2}, p_{n-1}, p_{n}\right) \right).\)
\end{itemize}
\end{definition}

This symmetry breaking assumption not only reduces the search space of the SAT solver, but it is required for the correctness of the encoding,
as clauses~\labelcref{eq:insideClauses1,eq:insideClauses2,eq:holeDefClauses1,eq:signotopeClauses11,eq:signotopeClauses12} rely on points being sorted from left to right.
Before discussing the proof of correctness of symmetry breaking,
let us first focus on the last condition, expressed explicitly in clause~\labelcref{eq:revLexClauses} of the encoding.
The main idea behind this condition is that \emph{reflecting} a pointset, i.e., applying the map $(x, y) \mapsto (-x, y)$,
preserves the presence of $k$-holes, or convex $k$-gons.
This is the reason for incorporating the \emph{parity} flag in the definition of $\sigma$-equivalence.
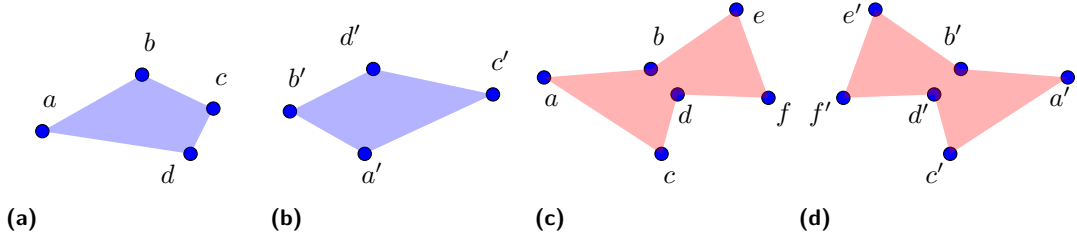
\begin{figure}
    \centering
    \begin{subfigure}{0.24\linewidth}
        \centering
        \begin{tikzpicture}
            \node[draw, circle, black, fill=blue, inner sep=0pt, minimum size=5pt, label={[xshift=0.1cm, yshift=0.1cm]$a$}] (a) at (0*0.75,0*0.75) {};
            \node[draw, circle, black, fill=blue, inner sep=0pt, minimum size=5pt, label={[xshift=0.1cm, yshift=0.1cm]$b$}] (b) at (1.75*0.75,1*0.75) {};
            \node[draw, circle, black, fill=blue, inner sep=0pt, minimum size=5pt, label={[xshift=0.1cm, yshift=0.1cm]$c$}] (c) at (3*0.75,0.4*0.75) {};
            \node[draw, circle, black, fill=blue, inner sep=0pt, minimum size=5pt, label={[xshift=-0.3cm, yshift=-0.6cm]$d$}] (d) at (2.6*0.75,-0.4*0.75) {};
            \coordinate (a) at (0*0.75,0*0.75);
            \coordinate (b) at (1.75*0.75,1*0.75);
            \coordinate (c) at (3*0.75,0.4*0.75);
            \coordinate (d) at (2.6*0.75,-0.4*0.75);
            \fill[blue, opacity=0.3] (a) -- (b) -- (c) -- (d) -- (a) -- cycle;
        \end{tikzpicture}
        \caption{}\label{fig:equiv-a}
    \end{subfigure}
    \begin{subfigure}{0.24\linewidth}
        \centering
        \begin{tikzpicture}[scale=0.75]
            \node[draw, circle, black, fill=blue, inner sep=0pt, minimum size=5pt, label={[xshift=0.1cm, yshift=-0.6cm]$a'$}] (a) at (0*0.75,0*0.75) {};
            \node[draw, circle, black, fill=blue, inner sep=0pt, minimum size=5pt, label={[xshift=0.1cm, yshift=0.1cm]$b'$}] (b) at (-1.75*0.75,1*0.75) {};
            \node[draw, circle, black, fill=blue, inner sep=0pt, minimum size=5pt, label={[xshift=0.1cm, yshift=0.1cm]$c'$}] (c) at (3*0.75,1.4*0.75) {};
            \node[draw, circle, black, fill=blue, inner sep=0pt, minimum size=5pt, label={[xshift=-0.3cm, yshift=0.1cm]$d'$}] (d) at (0.2*0.75,2*0.75) {};
            \coordinate (a) at (0*0.75,0*0.75);
            \coordinate (b) at (-1.75*0.75,1*0.75);
            \coordinate (c) at (3*0.75,1.4*0.75);
            \coordinate (d) at (0.2*0.75,2*0.75);*0.75
            \fill[blue, opacity=0.3] (b) -- (d) -- (c) -- (a) -- (b) -- cycle;
        \end{tikzpicture}
        \caption{}\label{fig:equiv-b}
    \end{subfigure}
    \begin{subfigure}{0.24\linewidth}
        \centering
        \begin{tikzpicture}[scale=.75]
            \node[draw, circle, black, fill=blue, inner sep=0pt, minimum size=5pt, label={[xshift=0.1cm, yshift=-0.6cm]$a$}] (a) at (1*2.5*0.75,6*0.4*0.75) {};
            \node[draw, circle, black, fill=blue, inner sep=0pt, minimum size=5pt, label={[xshift=0.1cm, yshift=0.1cm]$b$}] (b) at (2*2.5*0.75,6.5*0.4*0.75) {};
            \node[draw, circle, black, fill=blue, inner sep=0pt, minimum size=5pt, label={[xshift=0.1cm, yshift=-0.6cm]$c$}] (c) at (2.1*2.5*0.75,1.5*0.4*0.75) {};
            \node[draw, circle, black, fill=blue, inner sep=0pt, minimum size=5pt, label={[xshift=0.1cm, yshift=-0.6cm]$d$}] (d) at (2.25*2.5*0.75,5*0.4*0.75) {};
            \node[draw, circle, black, fill=blue, inner sep=0pt, minimum size=5pt, label={[xshift=0.3cm, yshift=-0.4cm]$e$}] (e) at (2.8*2.5*0.75,10*0.4*0.75) {};
            \node[draw, circle, black, fill=blue, inner sep=0pt, minimum size=5pt, label={[xshift=0.2cm, yshift=-0.6cm]$f$}] (f) at (3.1*2.5*0.75,4.8*0.4*0.75) {};
            \coordinate (a) at (1*2.5*0.75,6*0.4*0.75);
            \coordinate (b) at (2*2.5*0.75,6.5*0.4*0.75);
            \coordinate (c) at (2.1*2.5*0.75,1.5*0.4*0.75);
            \coordinate (d) at (2.25*2.5*0.75,5*0.4*0.75);
            \coordinate (e) at (2.8*2.5*0.75,10*0.4*0.75);
            \coordinate (f) at (3.1*2.5*0.75,4.8*0.4*0.75);
            \fill[red, opacity=0.3] (a) -- (b) -- (e) -- (f) -- (d) -- (c) -- (a) -- cycle;
        \end{tikzpicture}
        \caption{}\label{fig:equiv-c}
    \end{subfigure}
    \begin{subfigure}{0.24\linewidth}
        \centering
        \begin{tikzpicture}[scale=.75]
            \node[draw, circle, black, fill=blue, inner sep=0pt, minimum size=5pt, label={[xshift=-0.1cm, yshift=-0.6cm]$a'$}] (a) at (-1*2.5*0.75,6*0.4*0.75) {};
            \node[draw, circle, black, fill=blue, inner sep=0pt, minimum size=5pt, label={[xshift=-0.1cm, yshift=0.1cm]$b'$}] (b) at (-2*2.5*0.75,6.5*0.4*0.75) {};
            \node[draw, circle, black, fill=blue, inner sep=0pt, minimum size=5pt, label={[xshift=-0.2cm, yshift=-0.6cm]$c'$}] (c) at (-2.1*2.5*0.75,1.5*0.4*0.75) {};
            \node[draw, circle, black, fill=blue, inner sep=0pt, minimum size=5pt, label={[xshift=-0.2cm, yshift=-0.6cm]$d'$}] (d) at (-2.25*2.5*0.75,5*0.4*0.75) {};
            \node[draw, circle, black, fill=blue, inner sep=0pt, minimum size=5pt, label={[xshift=-0.3cm, yshift=-0.4cm]$e'$}] (e) at (-2.8*2.5*0.75,10*0.4*0.75) {};
            \node[draw, circle, black, fill=blue, inner sep=0pt, minimum size=5pt, label={[xshift=-0.3cm, yshift=-0.6cm]$f'$}] (f) at (-3.1*2.5*0.75,4.8*0.4*0.75) {};
            \coordinate (a) at (-1*2.5*0.75,6*0.4*0.75);
            \coordinate (b) at (-2*2.5*0.75,6.5*0.4*0.75);
            \coordinate (c) at (-2.1*2.5*0.75,1.5*0.4*0.75);
            \coordinate (d) at (-2.25*2.5*0.75,5*0.4*0.75);
            \coordinate (e) at (-2.8*2.5*0.75,10*0.4*0.75);
            \coordinate (f) at (-3.1*2.5*0.75,4.8*0.4*0.75);
            \fill[red, opacity=0.3] (a) -- (b) -- (e) -- (f) -- (d) -- (c) -- (a) -- cycle;
        \end{tikzpicture}
        \caption{}\label{fig:equiv-d}
    \end{subfigure}
  \caption{The pointsets depicted in \Cref{fig:equiv-a,fig:equiv-b} are $\sigma$-equivalent with \lstinline|parity := false|
  since the bijection $f$ defined by $(a,b,c,d) \mapsto (b', d', c', a')$ satisfies $\sigma(p_i, p_j, p_k) = \sigma(f(p_i), f(p_j), f(p_j))$ for every $\{p_i, p_j, p_k\} \subseteq \{a,b,c,d\}$.
On the other hand, no orientation-preserving bijection exists for~\Cref{fig:equiv-c,fig:equiv-d},
which are only $\sigma$-equivalent with \lstinline|parity := true|.}\label{fig:sigma-equiv}
  \end{figure}
As illustrated in~\Cref{fig:sigma-equiv}, there are pointsets that are only $\sigma$-equivalent to their reflections
with \lstinline|parity := true|.
We are now ready to state the main symmetry breaking theorem and sketch its proof.

\input{fig-symmetry-breaking.tex}
\begin{lstlisting}
theorem symmetry_breaking : ListInGenPos l →
  ∃ w : CanonicalPoints, Nonempty (l.toFinset ≃σ w.points.toFinset)
\end{lstlisting}

\begin{proof}[Proof Sketch]
The proof proceeds in 6 steps, illustrated in~\Cref{fig:symmetry-breaking}. In each of the steps, we will construct a new list of points that is $\sigma$-equivalent to the previous one, and the last one will be in canonical position.\footnote{Even though we defined $\sigma$-equivalence for sets of points, our formalization goes back and forth between sets and lists. Given that symmetry breaking distinguishes between the order of the points e.g., $x$-order, this proof proceeds over lists. All permutations of a list are immediately $\sigma$-equivalent.}
The main justification for each step is that, given that the function $\sigma$ is defined as a sign of the determinant, applying transformations that preserve (or, when \lstinline|parity := true|, uniformly reverse) the sign of the determinant will preserve (or uniformly reverse) the values of $\sigma$. In particular, given the identity $\det(AB) = \det(A)\det(B)$, if we apply a transformation to the points that corresponds to multiplying by a matrix $B$ such that $\det(B) > 0$, then $\sign(\det(A)) = \sign(\det(AB))$, and thus orientations will be preserved.
In step 1, we transform the list of points so that no two points share the same $x$-coordinate. This can be done by applying a rotation to the list of points, which corresponds to multiplying by a rotation matrix.
Rotations always have determinant $1$. 
In step 2, we translate all points by a constant vector $t$, by multiplying by a translation matrix, so that the left most point gets position $(0, 0)$, and naturally every other point will have a positive $x$-coordinate.
Let $L_2$ be the list of points after this transformation, excluding $(0,0)$ which we will denote by $p_1$.
Then, in step 3, we  apply the projective transformation $f: (x, y) \mapsto (y/x, 1/x)$ to every point in $L_2$, showing that this preserves orientations within $L_2$.
To see that this mapping is a $\sigma$-equivalence consider that 
\[
\begin{multlined}
 \sign \det \begin{pmatrix} p_x & q_x & r_x \\ p_y & q_y & r_y \\ 1 & 1 & 1 \end{pmatrix} =  \sign \det \left( \begin{pmatrix} 0 & 0 & 1 \\ 1 & 0 & 0\\ 0 & 1 & 0 \end{pmatrix}  \begin{pmatrix} \nicefrac{p_y}{p_x} & \nicefrac{q_y}{q_x} & \nicefrac{r_y}{r_x} \\ \nicefrac{1}{p_x} & \nicefrac{1}{q_x} & \nicefrac{1}{r_x} \\ 1 & 1 & 1 \end{pmatrix}  \begin{pmatrix} p_x & 0 & 0 \\ 0 & q_x & 0\\ 0 & 0 & r_x \end{pmatrix} \right)\\
                        = \sign \left(1 \cdot \det  \begin{pmatrix} \nicefrac{p_y}{p_x} & \nicefrac{q_y}{q_x} & \nicefrac{r_y}{r_x} \\ \nicefrac{1}{p_x} & \nicefrac{1}{q_x} & \nicefrac{1}{r_x} \\ 1 & 1 & 1 \end{pmatrix} \cdot  p_x q_x r_x  \right) = \sign \det \begin{pmatrix} p_y/p_x & q_y/q_x & r_y/r_x \\ 1/p_x & 1/q_x & 1/r_x \\ 1 & 1 & 1 \end{pmatrix}.
\end{multlined}
\]
To preserve orientations with respect to the leftmost point $(0, 0)$, we set $f( (0, 0)) = (0, \infty)$, a special point that is treated separately as follows.
As the function $\sigma$ takes points in $\mathbb{R}^2$ as arguments,
we need to define an extension
\(
  \sigma_{(0, \infty)}(q, r) = \begin{cases}
    1 & \text{if } q_x < r_x \\
    -1 & \text{otherwise}.  
  \end{cases},
\)
We then show that $\sigma((0, 0), q, r) = \sigma_{(0, \infty)}(f(q), f(r))$ for all points $q, r \in L_2$. 
In step 4, we sort the list $L_2$ by $x$-coordinate in increasing order, thus obtaining a list $L_3$.
This can be done while preserving $\sigma$-equivalence because sorting corresponds to a permutation, and all permutations of a list are $\sigma$-equivalent by definition.
In step 5, we check whether the \textsf{Lex order} condition above is satisfied in $L_3$, and if it is not, we reflect the pointset, which as explained above, preserves $\sigma$-equivalence by leveraging the parity option in the definition. Note that in such a case we need to relabel the points from left to right again.
In step 6, we bring point $(0, \infty)$ back into the range by first finding a constant $c$ such that all points in $L_3$ are to the right of the line $y=c$, and then finding a large enough value $M$ such that $(c, M)$ has the same orientation with respect to the other points as $(0, \infty)$ did, meaning that 
\(\sigma((c, M), q, r) = \sigma_{(0, \infty)}(q, r)\) for every $q, r \in L_3$.
Finally, we note that the list of points obtained in step 6 satisfies the \text{CCW-order} property by the following reasoning:
if $1 < i < j \leq n$ are indices, then 
\begin{align*}
  \sigma(p_1, p_i, p_j) = 1 &\iff \sigma((c, M), p_i, p_j) = 1\\
                            &\iff \sigma_{(0, \infty)}(p_i, p_j) = 1\tag{By step 6}\\
                            &\iff (p_i)_x < (p_j)_x \tag{By definition of $\sigma_{(0, \infty)}$}\\
                            &\iff \textsf{true} \tag{By step 4, since points are sorted and $i < j$}.
\end{align*}
This concludes the proof.
\end{proof}

\section{The Encoding and Its Correctness}\label{sec:encoding}
Having established the reduction to orientations,
and the symmetry-breaking assumption of canonicity,
we now turn to the construction of a CNF formula $\phi_n$
whose unsatisfiability would imply
that every set of $n$ points
contains a $6$-hole.\footnote{
  Satisfiability of $\phi_n$ would \emph{not} necessarily imply
  the existence of a point set without a $6$-hole, due to the \emph{realizability problem} (see e.g.,~\cite{subercaseaux2023minimizing}).
}
The formula is detailed in~\Cref{fig:full-encoding}.

\begin{figure}
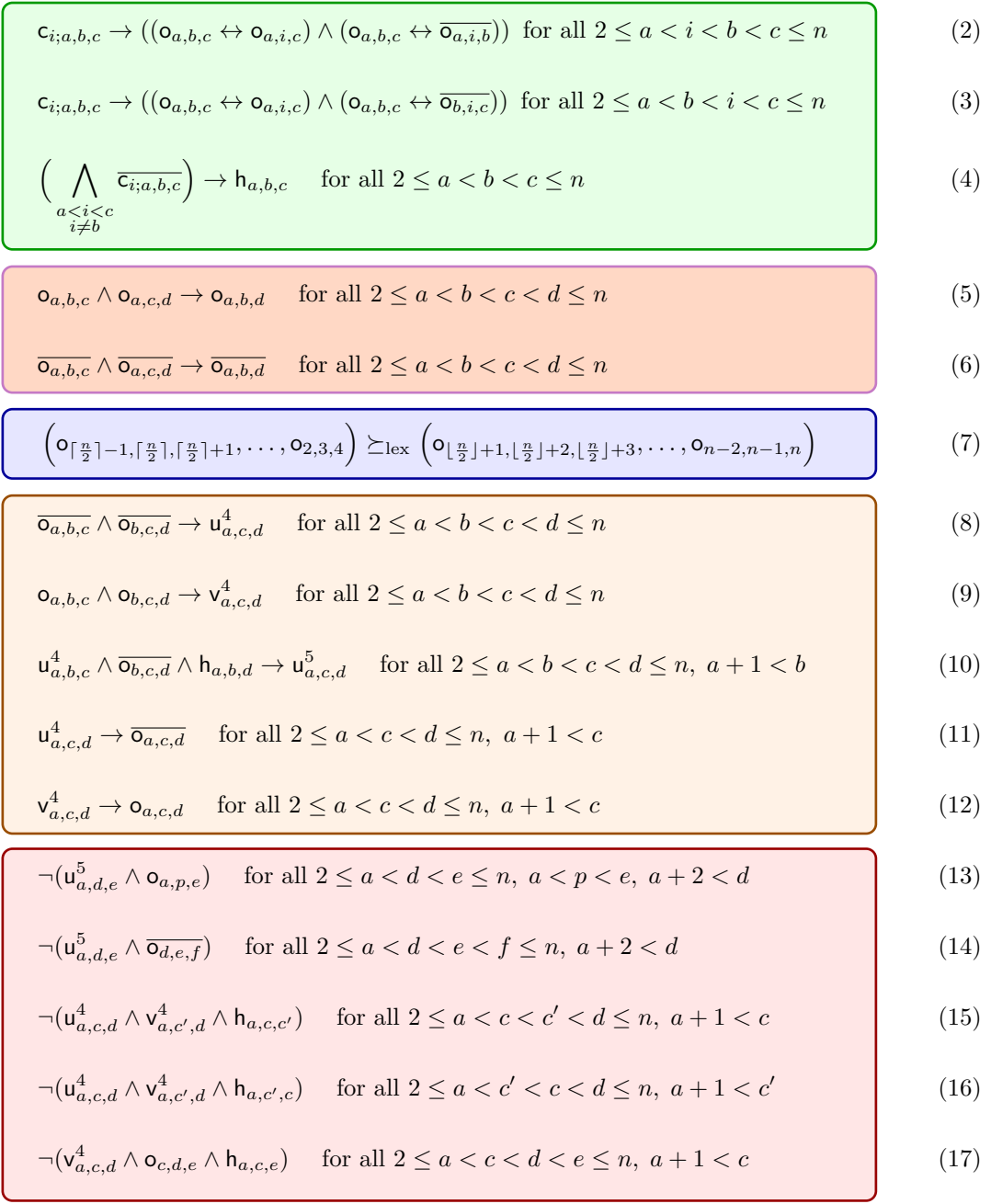

  \label{fig:full-encoding}
  \begin{spreadlines}{16pt}
\begin{gather}
\hfsetfillcolor{green!10}
\hfsetbordercolor{green!60!black}
\tikzmarkin{b}(12.0,-0.9)(-0.5,0.5)
  \cvar_{i; a,b, c} \rightarrow \left(\left(\orvar_{a,b,c} \leftrightarrow \orvar_{a, i, c}  \right) \land \left(\orvar_{a,b,c} \leftrightarrow \ov{\orvar_{a, i, b}}  \right)\right) \text{ for all } 2 \leq a < i < b < c \leq n\label{eq:insideClauses1}\\
  \cvar_{i; a,b, c} \rightarrow \left(\left(\orvar_{a,b,c} \leftrightarrow \orvar_{a, i, c}  \right) \land \left(\orvar_{a,b,c} \leftrightarrow \ov{\orvar_{b, i, c}}  \right)\right) \text{ for all } 2 \leq a < b < i < c \leq n\label{eq:insideClauses2}\\
  \tikzmarkend{b}\Big(\bigwedge_{\substack{a < i < c\\ i \neq b}} \ov{\cvar_{i; a,b,c}}\Big) \rightarrow \hvar_{a, b, c} \quad \text{ for all } 2 \leq a < b < c \leq n\label{eq:holeDefClauses1}\\
\tikzmarkin{a}(12.0,-0.3)(-0.5,0.5)
  \orvar_{a, b, c} \land \orvar_{a, c, d} \rightarrow \orvar_{a, b, d} \quad \text{ for all } 2 \leq a < b < c < d \leq n\label{eq:signotopeClauses11}\\
  \tikzmarkend{a}\ov{\orvar_{a, b, c}} \land \ov{\orvar_{a, c, d}} \rightarrow \ov{\orvar_{a, b, d}} \quad \text{ for all } 2 \leq a < b < c < d \leq n \label{eq:signotopeClauses12}\\
\hfsetfillcolor{blue!10}
\hfsetbordercolor{blue!60!black}
\tikzmarkin{c}(12.0,-0.4)(-0.5,0.6)
  \tikzmarkend{c}\left(\orvar_{\lceil \frac{n}{2} \rceil -1, \lceil \frac{n}{2} \rceil,\lceil \frac{n}{2} \rceil+1}, \ldots, \orvar_{2,3,4} \right) \succeq_{\text{lex}} \left(\orvar_{\lfloor \frac{n}{2}\rfloor +1,  \lfloor \frac{n}{2}\rfloor +2, \lfloor \frac{n}{2}\rfloor +3}, \ldots, \orvar_{n-2, n-1, n} \right)\label{eq:revLexClauses}\\
\hfsetfillcolor{orange!10}
\hfsetbordercolor{orange!60!black}
\tikzmarkin{d}(12.0,-0.3)(-0.5,0.5)
  \ov{\orvar_{a,b,c}} \land \ov{\orvar_{b,c,d}} \rightarrow \uvar_{a, c, d} \quad \text{ for all } 2 \leq a < b < c < d \leq n\label{eq:capDef}\\
  \orvar_{a, b, c} \land \orvar_{b, c, d} \rightarrow \vvar_{a, c, d} \quad \text{ for all } 2 \leq a < b < c < d \leq n \label{eq:cupDef}\\
  \uvar_{a,b,c} \land \ov{\orvar_{b,c,d}} \land \hvar_{a,b,d} \rightarrow \ufvar_{a, c, d} \quad \text{ for all } 2 \leq a < b < c < d \leq n,\; a+1<b\label{eq:capFDef}\\
  \uvar_{a, c, d} \rightarrow \ov{\orvar_{a,c,d}} \quad \text{ for all } 2 \leq a < c < d \leq n,\ a+1<c\label{eq:capDef2}\\
  \tikzmarkend{d}\vvar_{a, c, d} \rightarrow \orvar_{a,c,d} \quad \text{ for all } 2 \leq a < c < d \leq n,\; a+1<c\label{eq:cupDef2}\\
\hfsetfillcolor{red!10}
\hfsetbordercolor{red!60!black}
\tikzmarkin{e}(12.0,-0.5)(-0.5,0.5)
  \neg(\ufvar_{a,d,e} \land \orvar_{a, p, e}) \quad \text { for all } 2 \leq a < d < e \leq n, \; a < p < e, \; a+2 < d\label{eq:no6Hole1Below}\\
  \neg(\ufvar_{a,d,e} \land \ov{\orvar_{d, e, f}}) \quad \text { for all } 2 \leq a < d < e < f\leq n, \; a+2 < d\label{eq:no6Hole4Above}\\
  \neg(\uvar_{a,c,d} \land \vvar_{a, c', d} \land \hvar_{a,c,c'}) \quad \text{ for all } 2 \leq a < c < c' < d \leq n, \; a+1 < c\label{eq:no6Hole2Below1}\\
  \neg(\uvar_{a,c,d} \land \vvar_{a, c', d} \land \hvar_{a,c',c}) \quad \text{ for all } 2 \leq a < c' < c < d \leq n, \; a+1 < c'\label{eq:no6Hole2Below2}\\
  \tikzmarkend{e}\neg(\vvar_{a,c,d} \land \orvar_{c, d, e} \land \hvar_{a,c,e}) \quad \text{ for all } 2 \leq a < c < d < e \leq n, \; a+1 < c\label{eq:no6Hole3Below}
  \end{gather}
\end{spreadlines}
\caption{Encoding based on that of Heule and Scheucher for the Empty Hexagon Number~\cite{emptyHexagonNumber}. Each line determines a set of clauses. Unsatisfiability of the formula below for $n=30$ implies $h(6) \leq 30$, as detailed throughout the paper.}
\end{figure}

\subparagraph*{Variables.}
Let $S = (p_1, \ldots, p_n)$ be the list of points in canonical position.
We explain the variables of $\phi_n$
by specifying their values in the propositional assignment $\tau_S$
that is our intended model of $\phi_n$
corresponding to $S$. We then have:
\begin{itemize}
  \item
    For every $2 \leq a < b < c \leq n$, $\orvar_{a,b,c}$ is true
    iff $\sigma(p_a,p_b,p_c) = +1$.\footnote{
    Since the point set is in general position,
    we have $\neg \orvar_{a,b,c} \iff \sigma(p_a, p_b, p_c) = -1$.}

    The first optimization observes that orientations are antisymmetric:
    if $(p,q,r)$ is counterclockwise then $(q,p,r)$ is clockwise, etc.
    Thus one only needs $\orvar_{a,b,c}$ for ordered triples $(a,b,c)$,
    reducing the number of orientation variables by a factor of $3! = 6$
    relative to using all triples. The second optimization uses the \textbf{CCW-order} property of canonical positions:
    since all $\orvar_{1,a,b}$ are true, we may as well omit them from the encoding.

  \item
    Next, for every $a < b < c$ with $a < i < b$ or $b < i < c$,
    the variable $\cvar_{i;a,b,c}$ is true
    iff \lstinline|σPtInTriangle S[i] S[a] S[b] S[c]| holds.
    By \lstinline|σPtInTriangle_iff|, this is true exactly
    iff $p_i$ is inside the triangle $p_ap_bp_c$.
    The reason for assuming $(a,b,c)$ to be ordered is again symmetry:
    $p_ap_bp_c$ is the same triangle as $p_ap_cp_b$, etc.
    Furthermore thanks to the \textbf{$x$-order} property of canonical positions,
    if $p_i$ is in the triangle
    then $x(p_a) < x(p_i) < x(p_c)$.
    This implies that $a < i < c$,
    leaving one case distinction permuting $(i,b)$.

  \item
    For every $a < b < c$,
    $\hvar_{a,b,c}$ is true
    iff \lstinline|σIsEmptyTriangleFor S[a] S[b] S[c] S| holds.
    By a geometro-combinatorial connection analogous to ones above,
    this is true iff $p_ap_bp_c$ is a $3$-hole.

  \item
    Finally, one defines \emph{$4$-cap}, \emph{$5$-cap}, and \emph{$4$-cup} variables.
    For $a+1 < c < d$, $\uvar_{a,c,d}$ is true
    iff there is $b$ with $a < b < c$ with $\sigma(p_a,p_b,p_c) = \sigma(p_b,p_c,p_d) = -1$.
    $\vvar_{a,c,d}$ is analogous, except in that the two orientations are required to be counterclockwise.
    These are the $4$-caps and $4$-cups, respectively.
    The $5$-cap variables $\ufvar_{a,d,e}$
    are defined for $a+2 < d < e$.
    We set $\ufvar_{a,d,e}$ to true
    iff there exists $c$ with $a+1<c<d$
    such that $\uvar_{a,c,d}$, $\orvar_{c,d,e}$, and $\hvar_{a,c,e}$ are all true.
    Intuitively, $4$-caps and $4$-cups are clockwise and counterclockwise arcs of length $4$,
    respectively,
    whereas $5$-caps are clockwise arcs of length $5$ containing a $3$-hole.
    All three are depicted in~\Cref{fig:cup-cap-vars}. The usage of these variables is crucial to an efficient encoding:
    we will show below that a hexagon can be covered by only $4$ triangles,
    so one need not consider all ${6\choose 3}$ triangles contained within it.
\end{itemize}

\input{fig-cup-cap.tex}

\subparagraph*{Satisfaction.}
We now have to justify that the clauses of $\phi_n$
are satisfied by the intended interpretation $\tau_S$
for a $6$-hole-free point set $S$.
The variable-defining clauses~\labelcref{eq:insideClauses1,eq:insideClauses2,eq:holeDefClauses1,eq:capDef,eq:cupDef,eq:capFDef,eq:capDef2,eq:cupDef2}
follow essentially by definition combined with boolean reasoning.
The orientation properties~\labelcref{eq:signotopeClauses11,eq:signotopeClauses12}
have been established in the family of theorems \lstinline|σ_propᵢ|.
The lexicographic ordering clauses~\labelcref{eq:revLexClauses}
follow from the \textbf{Lex order} property of canonical positions.
Thus we are left with clauses~\labelcref{eq:no6Hole1Below,eq:no6Hole4Above,eq:no6Hole2Below1,eq:no6Hole2Below2,eq:no6Hole3Below}
which forbid the presence of certain $6$-holes.\footnote{
They are intended to forbid \emph{all} $6$-holes,
but proving completeness is not necessary for an unsatisfiability-based result.}
We illustrate why clause~\labelcref{eq:no6Hole1Below} is true.
The contrapositive is easier to state:
if $\tau_S$ satisfies $\ufvar_{a,d,e}\wedge\orvar_{a,p,e}$,
then $S$ contains a $6$-hole.
The intuitive argument is depicted in~\Cref{fig:clause-13-forbid}.
The clause directly implies the existence of a convex hexagon $apedcb$
such that $ace$ is a $3$-hole.
It turns out that this is enough to ensure
the existence of a $6$-hole by ``flattening'' the triangles $ape$, $edc$, and $cba$,
if necessary,
to obtain empty triangles $ap'e$, $ed'c$, and $cb'a$,
which can be assembled into a $6$-hole $ap'ed'cb'$.

Justifying this formally turned out to be complex,
requring a fair bit of reasoning about point \lstinline|Arc|s
and \lstinline|σCCWPoints|: lists of points winding around a convex polygon.
Luckily, the main argument can be summarized in terms of two facts:
(a) any triangle $abc$ contains an empty triangle $ab'c$; and
(b) empty shapes sharing a common line segment can be glued together.
Formally, (a) can be stated as

\begin{lstlisting}
theorem σIsEmptyTriangleFor_exists (gp : ListInGenPos S)
  (abc : [a, b, c] ⊆ S) : ∃ b' ∈ S, σ a b' c = σ a b c
    ∧ (b' = b ∨ σPtInTriangle b' a b c) ∧ σIsEmptyTriangleFor a b' c S.toFinset
\end{lstlisting}
\begin{proof}
  Given points $p,q$, say that $p \leq q$ iff $p$ is in the triangle $aqc$.
  This is a preorder.
  Now, the set $S' = \{x \in S \mid \sigma(a,x,c) = \sigma(a,b,c) \wedge x \leq b\}$
  is finite and so has a weakly minimal element $b'$,
  in the sense that no $x \in S'$ has $x < b'$.
  Emptiness of $ab'c$ follows by minimality.
\end{proof}

Moving on, (b) follows from a \emph{triangulation lemma}:
given any convex point set $S$
and a line $\overleftrightarrow{ab}$ between two vertices of $S$,
the convex hull of $S$ is contained in the convex hulls
of points on either side of $\overleftrightarrow{ab}$.
That is:
\begin{lstlisting}
theorem split_convexHull (cvx : ConvexPoints S) :
  ∀ {a b}, a ∈ S → b ∈ S →
    convexHull ℝ S ⊆ convexHull ℝ {x ∈ S | σ a b x ≠ ccw}
                    ∪ convexHull ℝ {x ∈ S | σ a b x ≠ cw}
\end{lstlisting}
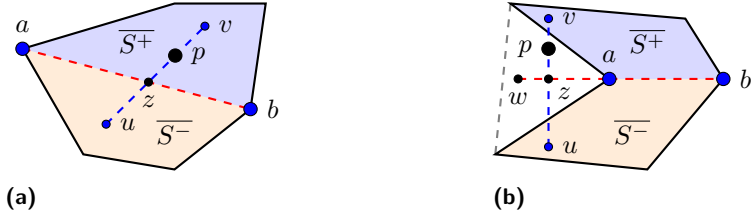
\begin{figure}
    \centering
    \begin{subfigure}{0.45\textwidth}
        \begin{tikzpicture}
        \coordinate (a) at (0, 0.4);
        \coordinate (b) at (3, -0.4);
        \coordinate (v1) at (3.2, 1);
        \coordinate (v2) at (2, 1);
        \coordinate (u1) at (0.8, -1);
        \coordinate (u2) at (2, -1.2);
        \coordinate (u) at (1.1, -0.6);
        \coordinate (v) at (2.4, 0.7);
        \coordinate (p) at (2.01,0.31);
        \coordinate (z) at (1.65789, -0.0421053);

        \fill[blue, opacity=0.15] (b) -- (v1) -- (v2) -- (a) -- cycle;
        \fill[orange, opacity=0.15] (a) -- (u1) -- (u2) -- (b) -- cycle;

        \draw[thick] (a) -- (u1) -- (u2) -- (b) -- (v1) -- (v2) -- (a);
        \draw[dashed, thick, red] (a) -- (b);
        \draw[dashed, thick, blue] (u) -- (v);

        \node[draw, circle, black, fill=blue, inner sep=0pt, minimum size=5pt, label=above:$a$] (pA) at (a) {};
        \node[draw, circle, black, fill=blue, inner sep=0pt, minimum size=5pt, label=right:$b$] (pB) at (b) {};
        \node[draw, circle, black, fill=blue, inner sep=0pt, minimum size=3pt, label=right:$u$] (pU) at (u) {};
        \node[draw, circle, black, fill=blue, inner sep=0pt, minimum size=3pt, label=right:$v$] (pV) at (v) {};

        \node[draw, circle, black, fill=black, inner sep=0pt, minimum size=5pt, label=right:$p$] (pP) at (p) {};
        \node[draw, circle, black, fill=black, inner sep=0pt, minimum size=3pt, label=below:$z$] (pZ) at (z) {};

        \node[] at (1.5, 0.5) {$\overline{S^+}$};
        \node[] at (2, -0.7) {$\overline{S^-}$};

        \end{tikzpicture}
        \caption{}\label{fig:triangulation2-a}
    \end{subfigure}
    \begin{subfigure}{0.45\textwidth}
        \begin{tikzpicture}
            \coordinate (a) at (1.5, 0);
            \coordinate (b) at (3, 0);
            \coordinate (v1) at (2.5, 0.8);
            \coordinate (v2) at (0.2, 1);
            \coordinate (u1) at (0, -1);
            \coordinate (u2) at (2, -1.2);
            \coordinate (u) at (0.7, -0.9);
            \coordinate (v) at (0.7, 0.8);
            \coordinate (p) at (0.7, 0.4);
            \coordinate (z) at (0.7, 0);
            \coordinate (w) at (0.3, 0);

            \fill[blue, opacity=0.15] (b) -- (v1) -- (v2) -- (a) -- cycle;
            \fill[orange, opacity=0.15] (a) -- (u1) -- (u2) -- (b) -- cycle;

            \draw[thick] (a) -- (u1) -- (u2) -- (b) -- (v1) -- (v2) -- (a);
            \draw[dashed, thick, red] (a) -- (b);
            \draw[dashed, thick, blue] (u) -- (v);
            \draw[dashed, thick, red] (w) -- (a);
            \draw[dashed, thick, opacity=0.5] (v2) -- (u1);

            \node[draw, circle, black, fill=blue, inner sep=0pt, minimum size=5pt, label=above:$a$] (pA) at (a) {};
            \node[draw, circle, black, fill=blue, inner sep=0pt, minimum size=5pt, label=right:$b$] (pB) at (b) {};
            \node[draw, circle, black, fill=blue, inner sep=0pt, minimum size=3pt, label=right:$u$] (pU) at (u) {};
            \node[draw, circle, black, fill=blue, inner sep=0pt, minimum size=3pt, label=right:$v$] (pV) at (v) {};

            \node[draw, circle, black, fill=black, inner sep=0pt, minimum size=5pt, label=left:$p$] (pP) at (p) {};
            \node[draw, circle, black, fill=black, inner sep=0pt, minimum size=3pt, label={[shift={(-0.05,0.05)}]-45:$z$}] (pZ) at (z) {};
            \node[draw, circle, black, fill=black, inner sep=0pt, minimum size=3pt, label=below:$w$] (pW) at (w) {};

            \node[] at (2, 0.5) {$\overline{S^+}$};
            \node[] at (1.8, -0.7) {$\overline{S^-}$};

        \end{tikzpicture}
        \caption{}\label{fig:triangulation2-b}
    \end{subfigure}
    \caption{Illustration of the proof for \lstinline|split_convexHull|. (a) Given point $p$, we obtain points $u$ and $v$ inside the two halves and $z$ as the point of intersection with the line $\overline{ab}$. (b) In this (contradictory) situation, the point $z$ has ended up outside the segment $\overline{ab}$, because $S$ is not actually convex. In this case we construct $w$ such that $z$ is on the $\overline{wa}$ segment, and observe that $w,z,a,b$ are collinear.}\label{fig:triangulation2}
\end{figure}
\begin{proof}
    Let $S^+=\{x\in S\mid \sigma(a,b,x)\ge 0\}$ and $S^-=\{x\in S\mid \sigma(a,b,x)\le 0\}$ be the two sets in the theorem, and let $p\in \overline{S}$, where $\overline{S}$ denotes the convex hull of $S$. Assume WLOG that $\sigma(a,b,p)\ge 0$. (We would like to show that $p\in \overline{S^+}$.) Now $p$ is a convex combination of elements of $S^+$ and elements of $S^-$, so there exist points $u\in \overline{S^-}$ and $v\in \overline{S^+}$ such that $p$ lies on the $\overline{uv}$ line.
    Because $\{x\mid \det(a,b,x)\le 0\}\supseteq S^-$ is convex, it follows that $\det(a,b,u)\le 0$, and likewise $\det(a,b,v)\ge 0$, so they lie on opposite sides of the $\overleftrightarrow{ab}$ line and hence $\overline{uv}$ intersects $\overleftrightarrow{ab}$ at a point $z$. The key point is that $z$ must in fact be on the line segment $\overline{ab}$; assuming that this was the case, we could obtain $z$ as a convex combination of $a$ and $b$, and $p$ as a convex combination of $v$ and $z$, and since $v$ is in $\overline{S^+}$ and $a,b\in S^+\subseteq\overline{S^+}$ we can conclude $p\in \overline{S^+}$.
    To show that $z\in \overline{ab}$, suppose not, so that $a$ lies between $z$ and $b$ (see \Cref{fig:triangulation2-b}). (The case where $z$ is on the $b$ side is similar.) We can decompose $z$ as a convex combination of some $w\in \overline{S\setminus\{a\}}$ and $a$, which means that $w,z,a,b$ are collinear and appear in this order on the line. Therefore $a$ is a convex combination of $w$ and $b$, which means that $a\in \overline{S\setminus\{a\}}$ which violates convexity of $S$.
\end{proof}

\noindent
By contraposition,
the triangulation lemma directly implies that
if $\{x \in S \mid \sigma(a,b,x) \neq +1\}$ and $\{x \in S \mid \sigma(a,b,x) \neq -1\}$
are both empty shapes in $P$,
then $S$ is an empty shape in $P$.

\subparagraph*{Running the CNF.}
Having now shown that our main result follows if $\phi_{30}$ is unsatisfiable,
we run a distributed computation to check its unsatisfiability.
We solve the SAT formula~$\phi_{30}$ produced by Lean using the same setup as
Heule and Scheucher~\cite{emptyHexagonNumber}, although using different hardware:
the Bridges 2 cluster of the Pittsburgh Supercomputing Center~\cite{cluster}.
Following Heule and Scheucher,
we partition the problem into 312\,418 subproblems.
Each of these subproblems was
solved using {\tt CaDiCaL} version 1.9.5.
The solver produced an LRAT proof for each execution,
which was validated using the {\tt cake\_lpr} verified checker on-the-fly
in order to avoid writing/storing/reading large files.
The total runtime was 25\,876.5 CPU hours, or roughly 3 CPU years.
The difference in runtime relative to Heule and Scheucher's original run
is purely due to the difference in hardware.
Additionally,
we validated that the subproblems cover the entire search space as Heule and Scheucher did~\cite[Section 7.3]{emptyHexagonNumber}.
This was done by verifying the unsatisfiability
of another formula that took 20 seconds to solve.




\section{Related Work}\label{sec:related-work}
Our formalization is closely related to a prior development
in which Marić put proofs of $g(6) \leq 17$ on a more solid foundation~\cite{19maric_fast_formal_proof_erdos_szekeres_conjecture_convex_polygons_most_six_points}.
The inequality,
originally obtained by Szekeres and Peters \cite{06szekeres_computer_solution_17_point_erdos_szekeres_problem}
using a specialized, unverified search algorithm,
was confirmed by Marić using a formally-verified SAT encoding.
Marić introduced an optimized encoding based on nested convex hull structures,
which, when combined with performance advances in modern SAT solvers,
significantly improved the search time over the unverified computation.

Our work focuses on the closely-related problem
of determining $k$-hole numbers $h(k)$.
Rather than devise a new SAT encoding,
we use essentially the same encoding presented by Heule and Scheucher~\cite{emptyHexagonNumber}.
Interestingly,
a (verified) proof of $g(6) \leq 17$ can be obtained
as a corollary of our development.
We can assert the hole variables $\hvar_{a,b,c}$ as true
while leaving the remainder of the encoding in~\Cref{fig:full-encoding} unchanged,
which trivializes constraints about emptiness
so that only the convexity constraints remain.\footnote{
This modification was performed by an author
who did not understand this part of the proof,
nevertheless having full confidence in its correctness
thanks to the Lean kernel having checked every assertion.}
The resulting CNF formula
asserts the existence of a set of $n$ points
with no convex $6$-gon.
We checked this formula to be unsatisfiable for $n = 17$,
giving the same result as Marić:
\begin{lstlisting}
theorem gon_6_theorem (pts : List Point) (gp : ListInGenPos pts)
    (h : pts.length ≥ 17) : HasConvexKGon 6 pts.toFinset
\end{lstlisting}

Since both formalizations can be executed,
we performed a direct comparison against Marić's encoding.
On a personal laptop,
we found that it takes negligible time (below 1s)
for our verified Lean encoder to output the full CNF.
In contrast,
Marić's encoder, extracted from Isabelle/HOL code,
took 437s to output a CNF
(this was compiled on Isabelle/HOL 2016,
the latest version that accepts the codebase without broader changes).
To circumvent the encoder slowness,
Marić wrote a C++ encoder
whose code was manually compared against the Isabelle/HOL specification.
We do not need to resort to an unverified implementation.

As for the encodings,
ours took 28s to solve,
while the Marić encoding took 787s (both using \textsf{cadical}).
This difference is likely accounted for in the relative size of the encodings,
in particular their symmetry breaking strategies.
For $k=6$ and $n$ points,
the encoding of Heule and Scheucher uses $O(n^4)$ clauses,
whereas the one of Marić uses $O(n^6)$ clauses.
They are based on different ideas:
the former as detailed in~\Cref{sec:symmetry-breaking},
whereas the latter on nested convex hulls.
The different approaches have been discussed by Scheucher \cite{scheucherTwoDisjoint5holes2020}.
This progress in solve times
represents an encouraging state of affairs;
we are optimistic that if continued,
it could lead to an eventual resolution of $g(7)$.

Further differences include what exactly was formally proven.
As with most work in this area,
we use the combinatorial abstraction of triple orientations.
We and Marić alike show that point sets in $\mathbb R^2$
satisfy orientation properties (\Cref{sec:triple-orientations}).
However, our work goes further in building the connection
between geometry and combinatorics:
our definitions of convexity and emptiness (\Cref{sec:geometry}),
and consequently the theorem statements,
are geometric ones based on convex hulls
as defined in Lean's \texttt{mathlib}~\cite{The_mathlib_Community_2020}.
In contrast, Marić axiomatizes these properties in terms of $\sigma$.
A skeptical reviewer must manually verify that these combinatorial definitions
correspond to the desired geometric concept.

A final point of difference concerns the verification of SAT proofs.
Marić fully reconstructs some of the SAT proofs on which his results depend,
though not the main one for $g(6)$,
in an NbE-based proof checker for Isabelle/HOL.
We make no such attempt for the time being,
instead passing our SAT proofs through the
formally verified proof checker \texttt{cake\_lpr}~\cite{tanVerifiedPropagationRedundancy2023}
and asserting unsatisfiability of the CNF as an axiom in Lean.
Thus we trust that the CNF formula produced by the verified Lean encoder
is the same one whose unsatisfiability was checked by \textsf{cake\_lpr}.

\section{Concluding Remarks}\label{sec:conclusions}
We have proved the correctness of the main result of Heule and Scheucher~\cite{emptyHexagonNumber},
implying $h(6) \leq 30$.
Given that the lower bound $h(6) > 29$ can be checked directly (see \cite{emptyHexagonNumber}),
we conclude the result $h(6) = 30$ is indeed correct.
We believe this work puts a \emph{happy ending} to
one line of research started by Klein, Erd\H{o}s and Szekeres in the 1930s.
Prior to formalization, the result of Heule and Scheucher
relied on the correctness of various components of a highly sophisticated encoding
that are hard to validate manually.
We developed a significant theory of combinatorial geometry
that was not present in~\textsf{mathlib}.
Beyond the main theorem presented here,
we showed how our framework can be used for other related theorems
such as $g(6) = 17$,
and we hope it can be used for proving many further results in the area.

Our formalization required approximately 300 hours of work over 3 months
by researchers with significant experience formalizing mathematics in Lean.
The final version of our proofs consists of approximately 4.7k lines of Lean code;
about $26\%$ are lemmas that should be moved to upstream libraries,
about $40\%$ develops the theory of orientations in plane geometry,
and the remaining $34\%$ (1550 LOC) validates the symmetry breaking and SAT encoding.

We substantially simplified the symmetry-breaking argument presented by Heule and Scheucher,
and derived in turn from Scheucher~\cite{scheucherTwoDisjoint5holes2020}.
Moreover, we found a small error in their proof,
as their transformation uses the mapping $(x, y) \mapsto (x/y, -1/y)$,
and incorrectly assumes that $x/y$ is increasing for points in CCW-order,
whereas only the slopes $y/x$ are increasing.
Similarly, we found a typo in the statement of the \textsf{Lex order} condition
that did not match the (correct) code of Heule and Scheucher.
Our formalization corrects this.

In terms of future work,
we hope to formally verify the result $h(7) = \infty$ due to Horton~\cite{hortonSetsNoEmpty1983},
and other results in Erd\H{o}s-Szekeres style problems.
A key challenge for the community
is to improve the connection between verified SAT tools and ITPs.
This presents a significant engineering task
for proofs that are hundreds of terabytes long (as in this result).
Although we are confident that our results are correct,
the trust story at this connection point has room for improvement.

\bibliography{main}

\begin{thebibliography}{10}

\bibitem{biereHandbookSatisfiabilityVolume2009}
A.~Biere, M.~Heule, H.~{van Maaren}, and T.~Walsh.
\newblock {\em Handbook of {{Satisfiability}}: {{Volume}} 185 {{Frontiers}} in
  {{Artificial Intelligence}} and {{Applications}}}.
\newblock {IOS Press}, {NLD}, 2009.

\bibitem{brakensiek2023resolution}
Joshua Brakensiek, Marijn Heule, John Mackey, and David Narváez.
\newblock {The Resolution of Keller's Conjecture}, 2023.
\newblock \href {http://arxiv.org/abs/1910.03740} {\path{arXiv:1910.03740}}.

\bibitem{21bright_sat_based_resolution_lams_problem}
Curtis Bright, Kevin K.~H. Cheung, Brett Stevens, Ilias~S. Kotsireas, and Vijay
  Ganesh.
\newblock A {SAT}-based resolution of {L}am's {P}roblem.
\newblock In {\em Thirty-Fifth {AAAI} Conference on Artificial Intelligence,
  {AAAI} 2021}, pages 3669--3676. {AAAI} Press, 2021.
\newblock URL: \url{https://doi.org/10.1609/aaai.v35i5.16483}, \href
  {https://doi.org/10.1609/AAAI.V35I5.16483}
  {\path{doi:10.1609/AAAI.V35I5.16483}}.

\bibitem{cluster}
Shawn~T. Brown, Paola Buitrago, Edward Hanna, Sergiu Sanielevici, Robin Scibek,
  and Nicholas~A. Nystrom.
\newblock {\em Bridges-2: A Platform for Rapidly-Evolving and Data Intensive
  Research}, pages 1--4.
\newblock Association for Computing Machinery, New York, NY, USA, 2021.

\bibitem{Castelvecchi2021}
Davide Castelvecchi.
\newblock Mathematicians welcome computer-assisted proof in 'grand unification'
  theory.
\newblock {\em Nature}, 595(7865):18–19, June 2021.
\newblock URL: \url{http://dx.doi.org/10.1038/d41586-021-01627-2}, \href
  {https://doi.org/10.1038/d41586-021-01627-2}
  {\path{doi:10.1038/d41586-021-01627-2}}.

\bibitem{Cayden}
Cayden Codel, Marijn J.~H. Heule, and Jeremy Avigad.
\newblock {Verified Encodings for SAT Solvers}.
\newblock In Alexander Nadel and Kristin~Yvonne Rozier, editors, {\em
  Proceedings of the 23rd conference on Formal Methods In Computer-Aided
  Design}, 2023.

\bibitem{Crawford}
James Crawford, Matthew Ginsberg, Eugene Luks, and Amitabha Roy.
\newblock Symmetry-breaking predicates for search problems.
\newblock In {\em Proc. KR'96, 5th Int. Conf. on Knowledge Representation and
  Reasoning}, pages 148--159. Morgan Kaufmann, 1996.

\bibitem{formalPythagoreanTriples}
Lu\'{i}s Cruz-Filipe, Jo\~{a}o Marques-Silva, and Peter Schneider-Kamp.
\newblock {Formally Verifying the Solution to the Boolean Pythagorean Triples
  Problem}.
\newblock {\em J. Autom. Reason.}, 63(3):695–722, oct 2019.
\newblock \href {https://doi.org/10.1007/s10817-018-9490-4}
  {\path{doi:10.1007/s10817-018-9490-4}}.

\bibitem{LPAR-21:Formally_Proving_Boolean_Pythagorean}
Lu\'{i}s Cruz-Filipe and Peter Schneider-Kamp.
\newblock {Formally Proving the Boolean Pythagorean Triples Conjecture}.
\newblock In Thomas Eiter and David Sands, editors, {\em LPAR-21. 21st
  International Conference on Logic for Programming, Artificial Intelligence
  and Reasoning}, volume~46 of {\em EPiC Series in Computing}, pages 509--522.
  EasyChair, 2017.
\newblock URL: \url{https://easychair.org/publications/paper/xq6J}, \href
  {https://doi.org/10.29007/jvdj} {\path{doi:10.29007/jvdj}}.

\bibitem{demouraLeanTheoremProver2015}
Leonardo {de Moura}, Soonho Kong, Jeremy Avigad, Floris {van Doorn}, and Jakob
  {von Raumer}.
\newblock The {{Lean Theorem Prover}} ({{System Description}}).
\newblock In Amy~P. Felty and Aart Middeldorp, editors, {\em Automated
  {{Deduction}} - {{CADE-25}}}, pages 378--388, {Cham}, 2015. {Springer
  International Publishing}.

\bibitem{60erdos_some_extremum_problems_elementary_geometry}
Paul Erd{\H{o}}s and George Szekeres.
\newblock On some extremum problems in elementary geometry.
\newblock {\em Ann. Univ. Sci. Budapest. E{\"o}tv{\"o}s Sect. Math.},
  3(4):53--62, 1960.

\bibitem{35erdos_combinatorial_problem_geometry}
Paul Erd{\H{o}}s and Gy{\"o}rgy Szekeres.
\newblock A combinatorial problem in geometry.
\newblock {\em Compositio Mathematica}, 2:463--470, 1935.
\newblock URL: \url{http://eudml.org/doc/88611}.

\bibitem{felsnerSweepsArrangementsSignotopes2001}
Stefan Felsner and Helmut Weil.
\newblock Sweeps, arrangements and signotopes.
\newblock {\em Discrete Applied Mathematics}, 109(1):67--94, April 2001.
\newblock \href {https://doi.org/10.1016/S0166-218X(00)00232-8}
  {\path{doi:10.1016/S0166-218X(00)00232-8}}.

\bibitem{gerkenEmptyConvexHexagons2008}
Tobias Gerken.
\newblock {Empty Convex Hexagons in Planar Point Sets}.
\newblock {\em Discrete \& Computational Geometry}, 39(1):239--272, mar 2008.
\newblock \href {https://doi.org/10.1007/s00454-007-9018-x}
  {\path{doi:10.1007/s00454-007-9018-x}}.

\bibitem{GilAndWennerbeck}
Sofia Giljeg{\r{a}}rd and Johan Wennerbeck.
\newblock {Puzzle Solving with Proof}.
\newblock Master's thesis, Chalmers University of Technology, 2021.

\bibitem{gowers2023conjecture}
W.~T. Gowers, Ben Green, Freddie Manners, and Terence Tao.
\newblock On a conjecture of {M}arton, 2023.
\newblock \href {http://arxiv.org/abs/2311.05762} {\path{arXiv:2311.05762}}.

\bibitem{Harborth1978}
Heiko Harborth.
\newblock {Konvexe Fünfecke in ebenen Punktmengen.}
\newblock {\em Elemente der Mathematik}, 33:116--118, 1978.
\newblock URL: \url{http://eudml.org/doc/141217}.

\bibitem{Heule_2016}
Marijn J.~H. Heule, Oliver Kullmann, and Victor~W. Marek.
\newblock {\em {Solving and Verifying the Boolean Pythagorean Triples Problem
  via Cube-and-Conquer}}, page 228–245.
\newblock Springer International Publishing, 2016.
\newblock URL: \url{http://dx.doi.org/10.1007/978-3-319-40970-2_15}, \href
  {https://doi.org/10.1007/978-3-319-40970-2_15}
  {\path{doi:10.1007/978-3-319-40970-2_15}}.

\bibitem{emptyHexagonNumber}
Marijn J.~H. Heule and Manfred Scheucher.
\newblock Happy ending: An empty hexagon in every set of 30 points, 2024.
\newblock \href {http://arxiv.org/abs/2403.00737} {\path{arXiv:2403.00737}}.

\bibitem{holmsen2017two}
Andreas~F Holmsen, Hossein~Nassajian Mojarrad, J{\'a}nos Pach, and G{\'a}bor
  Tardos.
\newblock Two extensions of the erd{\H{o}}s-szekeres problem.
\newblock {\em arXiv preprint arXiv:1710.11415}, 2017.

\bibitem{hortonSetsNoEmpty1983}
J.~D. Horton.
\newblock {Sets with No Empty Convex 7-Gons}.
\newblock {\em Canadian Mathematical Bulletin}, 26(4):482--484, 1983.
\newblock \href {https://doi.org/10.4153/CMB-1983-077-8}
  {\path{doi:10.4153/CMB-1983-077-8}}.

\bibitem{knuthAxiomsHulls1992}
Donald~E. Knuth.
\newblock Axioms and {{Hulls}}.
\newblock In Donald~E. Knuth, editor, {\em Axioms and {{Hulls}}}, Lecture
  {{Notes}} in {{Computer Science}}, pages 1--98. {Springer}, {Berlin,
  Heidelberg}, 1992.
\newblock \href {https://doi.org/10.1007/3-540-55611-7_1}
  {\path{doi:10.1007/3-540-55611-7_1}}.

\bibitem{konev2014sat}
Boris Konev and Alexei Lisitsa.
\newblock {A SAT Attack on the Erdos Discrepancy Conjecture}, 2014.
\newblock \href {http://arxiv.org/abs/1402.2184} {\path{arXiv:1402.2184}}.

\bibitem{lammichEfficientVerifiedSAT2020}
Peter Lammich.
\newblock {Efficient Verified (UN)SAT Certificate Checking}.
\newblock {\em Journal of Automated Reasoning}, 64(3):513--532, March 2020.
\newblock \href {https://doi.org/10.1007/s10817-019-09525-z}
  {\path{doi:10.1007/s10817-019-09525-z}}.

\bibitem{10maric_formal_verification_modern_sat_solver_shallow_embedding_isabelle_hol}
Filip Maric.
\newblock Formal verification of a modern {SAT} solver by shallow embedding
  into {I}sabelle/{HOL}.
\newblock {\em Theor. Comput. Sci.}, 411(50):4333--4356, 2010.
\newblock URL: \url{https://doi.org/10.1016/j.tcs.2010.09.014}, \href
  {https://doi.org/10.1016/J.TCS.2010.09.014}
  {\path{doi:10.1016/J.TCS.2010.09.014}}.

\bibitem{19maric_fast_formal_proof_erdos_szekeres_conjecture_convex_polygons_most_six_points}
Filip Maric.
\newblock Fast formal proof of the {E}rd{\H{o}}s-{S}zekeres conjecture for
  convex polygons with at most 6 points.
\newblock {\em J. Autom. Reason.}, 62(3):301--329, 2019.
\newblock URL: \url{https://doi.org/10.1007/s10817-017-9423-7}, \href
  {https://doi.org/10.1007/S10817-017-9423-7}
  {\path{doi:10.1007/S10817-017-9423-7}}.

\bibitem{The_mathlib_Community_2020}
The mathlib Community.
\newblock The {L}ean mathematical library.
\newblock In {\em Proceedings of the 9th ACM SIGPLAN International Conference
  on Certified Programs and Proofs}, POPL ’20. ACM, January 2020.
\newblock URL: \url{http://dx.doi.org/10.1145/3372885.3373824}, \href
  {https://doi.org/10.1145/3372885.3373824}
  {\path{doi:10.1145/3372885.3373824}}.

\bibitem{nicolasEmptyHexagonTheorem2007}
Carlos~M. Nicolas.
\newblock {The Empty Hexagon Theorem}.
\newblock {\em Discrete \& Computational Geometry}, 38(2):389--397, September
  2007.
\newblock \href {https://doi.org/10.1007/s00454-007-1343-6}
  {\path{doi:10.1007/s00454-007-1343-6}}.

\bibitem{oeVersatVerifiedModern2012}
Duckki Oe, Aaron Stump, Corey Oliver, and Kevin Clancy.
\newblock {Versat: A Verified Modern SAT Solver}.
\newblock In Viktor Kuncak and Andrey Rybalchenko, editors, {\em {Verification,
  Model Checking, and Abstract Interpretation}}, pages 363--378, {Berlin,
  Heidelberg}, 2012. {Springer Berlin Heidelberg}.

\bibitem{scheucherTwoDisjoint5holes2020}
Manfred Scheucher.
\newblock Two disjoint 5-holes in point sets.
\newblock {\em Computational Geometry}, 91:101670, December 2020.
\newblock \href {https://doi.org/10.1016/j.comgeo.2020.101670}
  {\path{doi:10.1016/j.comgeo.2020.101670}}.

\bibitem{skotam_creusat_2022}
Sarek~Høverstad Skotåm.
\newblock {CreuSAT, Using Rust and Creusot to create the world's fastest
  deductively verified SAT solver}.
\newblock Master's thesis, University of Oslo, 2022.
\newblock URL: \url{https://www.duo.uio.no/handle/10852/96757}.

\bibitem{slomanATeamMathProves2023}
Leila Sloman.
\newblock {`A-Team' of Math Proves a Critical Link Between Addition and Sets}.
\newblock
  https://www.quantamagazine.org/a-team-of-math-proves-a-critical-link-between-addition-and-sets-20231206/,
  December 2023.

\bibitem{Subercaseaux_Heule_2023}
Bernardo Subercaseaux and Marijn J.~H. Heule.
\newblock {The Packing Chromatic Number of the Infinite Square Grid is 15}.
\newblock In Sriram Sankaranarayanan and Natasha Sharygina, editors, {\em Tools
  and Algorithms for the Construction and Analysis of Systems - 29th
  International Conference, TACAS 2023, Held as Part of ETAPS 2022,
  Proceedings, Part I}, volume 13993 of {\em Lecture Notes in Computer
  Science}, page 389–406. Springer, 2023.
\newblock \href {https://doi.org/10.1007/978-3-031-30823-9_20}
  {\path{doi:10.1007/978-3-031-30823-9_20}}.

\bibitem{subercaseaux2023minimizing}
Bernardo Subercaseaux, John Mackey, Marijn J.~H. Heule, and Ruben Martins.
\newblock Minimizing pentagons in the plane through automated reasoning, 2023.
\newblock \href {http://arxiv.org/abs/2311.03645} {\path{arXiv:2311.03645}}.

\bibitem{suk2017erdos}
Andrew Suk.
\newblock On the erd{\H{o}}s-szekeres convex polygon problem.
\newblock {\em Journal of the American Mathematical Society}, 30(4):1047--1053,
  2017.

\bibitem{szekeres_peters_2006}
George Szekeres and Lindsay Peters.
\newblock Computer solution to the 17-point {Erd\H{o}s-Szekeres} problem.
\newblock {\em The ANZIAM Journal}, 48(2):151--164, 2006.
\newblock \href {https://doi.org/10.1017/S144618110000300X}
  {\path{doi:10.1017/S144618110000300X}}.

\bibitem{06szekeres_computer_solution_17_point_erdos_szekeres_problem}
George Szekeres and Lindsay Peters.
\newblock Computer solution to the 17-point erd{\H{o}}s-szekeres problem.
\newblock {\em The ANZIAM Journal}, 48(2):151--164, 2006.

\bibitem{tanVerifiedPropagationRedundancy2023}
Yong~Kiam Tan, Marijn J.~H. Heule, and Magnus~O. Myreen.
\newblock {Verified Propagation Redundancy and Compositional UNSAT Checking in
  CakeML}.
\newblock {\em International Journal on Software Tools for Technology
  Transfer}, 25(2):167--184, April 2023.
\newblock \href {https://doi.org/10.1007/s10009-022-00690-y}
  {\path{doi:10.1007/s10009-022-00690-y}}.

\bibitem{Walters2004ItAT}
Mark Walters.
\newblock {It Appears That Four Colors Suffice : A Historical Overview of the
  Four-Color Theorem}.
\newblock 2004.
\newblock URL: \url{https://api.semanticscholar.org/CorpusID:14382286}.

\bibitem{drat-trim14}
Nathan Wetzler, Marijn J.~H. Heule, and Warren~A. Hunt.
\newblock {DRAT}-trim: Efficient checking and trimming using expressive clausal
  proofs.
\newblock In Carsten Sinz and Uwe Egly, editors, {\em Theory and Applications
  of Satisfiability Testing -- SAT 2014}, pages 422--429, Cham, 2014. Springer
  International Publishing.

\end{thebibliography}

\end{document}